\newcommand{\etal}{{\em et al.}\xspace}
\newcommand{\set}[1]{\{#1\}}
\newcommand{\Set}[2]{\{ #1 \,|\, #2\}}
\newcommand{\N}{\Bbb{N}}
\newcommand{\F}{\Bbb{F}}
\newcommand{\C}{\Bbb{C}}
\renewcommand{\H}{\mathscr{H}}
\newcommand{\dens}{{\cal D}}
\renewcommand{\phi}{\varphi}
\newcommand{\ket}[1]{{\ensuremath{\lvert#1\rangle}}}
\newcommand{\bra}[1]{{\ensuremath{\langle#1\rvert}}}
\newcommand{\ketbra}[2]{\ket{#1}\hspace{-0.3ex}\bra{#2}}
\newcommand{\proj}[1]{\ketbra{#1}{#1}}
\newcommand{\tr}{\mathrm{tr}}
\newcommand{\guess}{\mathrm{Guess}}
\newcommand{\I}{\Bbb{I}}
\newcommand{\id}{id}
\newcommand{\UNIF}{U}
\newcommand{\MAC}{{\sf MAC}}
\newcommand{\QMAC}{{\sf QMAC}}
\newcommand{\QENC}{{\sf QENC}}
\newcommand{\Auth}{{\sf Auth}}
\newcommand{\Verify}{{\sf Verify}}
\newcommand{\Refresh}{{\sf Refresh}}
\newcommand{\hash}{{\sf H}}
\newcommand{\Ext}{{\sf Ext}}
\renewcommand{\SS}{{\sf SS}}
\newcommand{\Exe}{{\cal E}\hspace{-0.3ex}{\sl x\hspace{-0.15ex}e}}
\newcommand{\K}{{\cal K}}
\newcommand{\KEY}{{\cal K\hspace{-0.15ex}E\hspace{-0.2ex}Y}}
\newcommand{\MSG}{{\cal M\hspace{-0.15ex}S\hspace{-0.1ex}G}}
\newcommand{\TAG}{{\cal T}}
\newcommand{\msg}{m\hspace{-0.2ex}s\hspace{-0.15ex}g}
\newcommand{\key}{k\hspace{-0.1ex}e\hspace{-0.15ex}y}
\newcommand{\rA}{\ensuremath{\textit{\textsf{A}}}}
\newcommand{\rB}{\ensuremath{\textit{\textsf{B}}}}
\newcommand{\rC}{\ensuremath{\textit{\textsf{C}}}}
\newcommand{\rE}{\ensuremath{\textit{\textsf{E}}}}
\newcommand{\BB}{\text{\raisebox{0.3ex}{\tiny \sc bb84}}}
\newcommand{\code}{{\cal C}}
\newcommand{\noise}{\phi}
\newcommand{\eps}{\varepsilon}
\newcommand{\epsclose}[1][\varphi]{\approx_{#1}}
\newenvironment{mybox}
          {\smallskip\noindent\hspace{-0.4ex}\begin{minipage}{1.01\columnwidth} \begin{framed}\hspace{-1ex}\begin{minipage}{1.01\columnwidth} }
         {\vspace{-2.6ex} \end{minipage} \end{framed} 
\end{minipage}\par\bigskip}
\title{Quantum Authentication and Encryption with Key Recycling\thanks{A differently formatted version of (version v2 of) this arXiv article 1610.05614 is available from the proceedings of {\em Advances in Cryptology\,-\,EUROCRYPT 2017} (Springer-Verlag), or from {\tt eprint.iacr.org/2017/102}.}
}
\author{Serge Fehr\inst{1}\thanks{Member of QuSoft, the Dutch research center for quantum software ({\tt www.qusoft.org}). } 
\and Louis Salvail\inst{2}\thanks{Funded by Canada's NSERC discovery grant
and NSERC discovery accelerator. }}
\institute{Cryptology Group, Centrum Wiskunde \& Informatica (CWI), Amsterdam, The Netherlands 
\and 
Department of Computer Science and Operations Research (DIRO), 
Universit{\'e} de Montr{\'e}al, Canada 
}
\begin{document}

\maketitle


\setcounter{footnote}{0}

\begin{abstract}
We propose an information-theoretically secure encryption scheme for classical messages with quantum ciphertexts that offers {\em detection} of eavesdropping attacks, and {\em re-usability of the key} in case no eavesdropping took place: the entire key can be securely re-used for encrypting new messages as long as no attack is detected. 
This is known to be impossible for fully classical schemes, where there is no way to detect plain eavesdropping attacks. 

This particular application of quantum techniques to cryptography was originally proposed by Bennett, Brassard and Breidbart in 1982, even before proposing quantum-key-distribution, and a simple candidate scheme was suggested but no rigorous security analysis was given.  
The idea was picked up again in 2005, when Damg{\aa}rd, Pedersen and Salvail suggested a new scheme for the same task, but now with a rigorous security analysis. However, their scheme is much more demanding in terms of quantum capabilities: it requires the users to have a quantum computer. 

In contrast, and like the original scheme by Bennett \etal, our new scheme requires from the honest users merely to prepare and measure single BB84 qubits. 
As such, we not only show the first provably-secure scheme that is within reach of current technology, but we also confirm Bennett \etal's original intuition that a scheme in the spirit of their original construction is indeed secure. 
\end{abstract}

\section{Introduction}

\paragraph{\sc Background. } 

Classical information-theoretic encryption (like the one-time pad) and authentication (like Carter-Wegman authentication) have the serious downside that the key can be re-used only a small number of times, e.g. only {\em once} in case of the one-time pad for encryption or a strongly universal$_2$ hash function for authentication. This is inherent since by simply {\em observing} the communication, an eavesdropper Eve inevitably learns a substantial amount of information on the key. Furthermore, there is no way for the communicating parties, Alice and Bob, to {\em know} whether Eve is present and has observed the communication or not, so they have to assume the worst. 

This situation changes radically when we move to the quantum setting and let the ciphertext (or authentication tag) be a quantum state: then, by the fundamental properties of quantum mechanics, an Eve that {\em observes} the communicated state inevitably {\em changes} it, and so it is potentially possible for the receiver Bob to detect this, and, vice versa, to conclude that the key is still secure and thus can be safely re-used in case everything looks as it is supposed to be. 

This idea of key re-usability by means of a quantum ciphertext goes back to a manuscript titled ``{\em Quantum Cryptography II: How to re-use a one-time pad safely even if $P\!=\!N\!P$}'' by Bennett, Brassard and Breidbart written in 1982. However, their paper was originally not published, and the idea was put aside after two of the authors discovered what then became known as BB84 quantum-key-distribution~\cite{BB84}.%
\footnote{A freshly typeset version of the original manuscript was then published more than 30 years later in~\cite{BBB}. }
Only much later in 2005, this idea was picked up again by Damg{\aa}rd, Pedersen and Salvail in~\cite{DPS05} (and its full version in~\cite{DPS14}), where they proposed a new such encryption scheme and gave a rigorous security proof\,---\,in contrast, Bennett \etal's original reasoning was very informal and hand-wavy.  

The original scheme by Bennett \etal is simple and natural: you one-time-pad encrypt the message, add some redundancy by encoding the ciphertext using an error correction (or~detection) code, and encode the result bit-wise into what we nowadays call BB84 qubits. The scheme by Damg{\aa}rd \etal is more involved; in particular, the actual quantum encoding is not done by means of single qubits, but by means of states that form a set of mutually unbiased bases in a Hilbert space of large dimension. This in particular means that their scheme requires a {\em quantum computer} to produce the quantum ciphertexts and to decrypt them.

\paragraph{\sc Our Results. } 

We are interested in the question of whether one can combine the simplicity of the originally proposed encryption scheme by Bennett \etal with a rigorous security analysis as offered by Damg{\aa}rd \etal for their scheme; in particular, whether there is a provably secure scheme that is within reach of being implementable with current technology\,---\,and we answer the question in the affirmative.  

We start with the somewhat simpler problem of finding an {\em authentication} scheme that allows to re-use the key in case no attack is detected, and we show a very simple solution. In order to authenticate a (classical) message $\msg$, we encode a random bit string $x \in \set{0,1}^n$ into BB84 qubits $H^\theta\ket{x}$, where $\theta \in \set{0,1}^n$ is part of the shared secret key, and we compute a tag $t = \MAC(k,\msg\|x)$ of the message concatenated with $x$, where $\MAC$ is a classical information-theoretic one-time message authentication code, and its key $k$ is the other part of the shared secret key. The qubits $H^\theta\ket{x}$ and the classical tag $t$ are then sent along with $\msg$, and the receiver verifies correctness of the received message in the obvious way by measuring the qubits to obtain $x$ and checking $t$. 

One-time security of the scheme is obvious, and the intuition for key-recycling is as follows. Since Eve does not know $\theta$, she has a certain minimal amount of uncertainty in $x$, so that, if $\MAC$ has suitable extractor-like properties, the tag $t$ is (almost) random and {\em independent} of $k$ and $\theta$, and thus gives away no information on $k$ and $\theta$. Furthermore, if Eve tries to gain 
information on $k$ and $\theta$ by measuring some qubits, she disturbs these qubits and is likely to be detected. 
A subtle issue is that if Eve measures only {\em very few} qubits then she has a good chance of not being detected, while still learning a little bit on $\theta$ by the fact that she has not been detected. However, as long as her uncertainty in $\theta$ is large enough this should not help her (much), and the more information on $\theta$ she tries to collect this way the more likely it is that she gets caught. 

We show that the above intuition is correct. Formally, we prove that as long as the receiver Bob accepts the authenticated message, the key-pair $(k,\theta)$ can be safely re-used, and if Bob rejects, it is good enough to simply refresh~$\theta$. Our proof is based on techniques introduced in~\cite{TFKW13} and extensions thereof. 

Extending our authentication scheme to an encryption scheme is intuitively quite easy: we simply extract a one-time-pad key from $x$, using a strong extractor (with some additional properties) with a seed that is also part of the shared secret key. Similarly to above, we can prove that as long as the receiver Bob accepts, the key can be safely re-used, and if Bob rejects it is good enough to refresh~$\theta$. 

In our scheme, the description length%
\footnote{In our scheme, $\theta$ is not uniformly random in $\set{0,1}^n$ but is chosen to be a code word, as such, its description length is smaller than its physical bit length, and given by the dimension of the code. }
of $\theta$ is $m + 3\lambda$, where $m$ is the length of the encrypted message $\msg$ and $\lambda$ is the security parameter (so that the scheme fails with probability at most $2^{-\lambda}$). Thus, with respect to the number of fresh random bits that are needed for the key refreshing, i.e. for updating the key in case Bob rejects, our encryption scheme is comparable to the scheme by Damg{\aa}rd \etal%
\footnote{Their scheme needs $m+\ell$ fresh random bits for key refreshing, where $\ell$ is a parameter in their construction, and their scheme fails with probability approximately $2^{-\ell/2}$. }
and optimal in terms of the dependency on the message length~$m$. 

Our schemes can be made {\em noise robust} in order to deal with a (slightly) noisy quantum communication; the generic solution proposed in~\cite{DPS05,DPS14} of using a quantum error correction code is not an option for us as it would require a quantum computer for en- and decoding. 
Unfortunately, using straightforward error correction techniques, like sending along the syndrome of $x$ with respect to a suitable error correcting code, renders our proofs invalid beyond an easy fix, though it is unclear whether the scheme actually becomes insecure. However, we can deal with the issue by means of using error correction ``without leaking partial information'', as introduced by Dodis and Smith~\cite{DS05} and extended to the quantum setting by Fehr and Schaffner~\cite{FS08}. Doing error correction in a more standard way, which would offer more freedom in choosing the error correction code and allow for a larger amount of noise, remains an interesting open problem. 


\paragraph{\sc Encryption with Key Recycling vs QKD. } 

A possible objection against the idea of encryption with key recycling is that one might just as well use~QKD to produce a new key, rather than re-using the old one. However, there {\em are} subtle advantages of using encryption with key recycling instead.
For instance, encryption with key recycling is (almost) non-interactive and requires only {\em 1 bit} of authenticated feedback: ``accept'' or ``reject'', that can be provided {\em offline}, i.e., after the communication of the private message, as long as it is done before the scheme is re-used. This opens the possibility to provide the feedback by means of a different channel, like by confirming over the phone. 
In contrast, for QKD, a {\em large amount} of data needs to be authenticated {\em online} and in {\em both directions}.  
If no physically authenticated channel is available, then the authenticated feedback can actually be done very easily: Alice appends a random token to the message she communicates to Bob in encrypted form, and Bob confirms that no attack is detected by returning the token back to Alice\,---\,in~plain\,---\,and in case he detected an attack, he sends a reject message instead.%
\footnote{
Of course, when Bob sends back the token to confirm, Eve can easily replace it by the reject message and so prevent Alice and Bob from finding agreement, but this is something that Eve can always achieve by ``altering the last message'', also in QKD. } 
Furthermore, encryption with key recycling has the potential to be {\em more efficient} than QKD in terms of communication. Even though this is not the case for our scheme, there is certainly potential, because no sifting takes place and hence there is no need to throw out a fraction of the quantum communication. 
Altogether, on a stable quantum network for instance, encryption with key-recycling would be the preferred choice over QKD. 
Last but not least, given that the re-usability of a one-time-pad-like encryption key was one of the very first proposed applications of quantum cryptography\,---\,even before QKD\,---\,we feel that giving a satisfactory answer should be of intellectual interest.

\paragraph{\sc Related Work. } 

Besides the work of Brassard \etal and of Damg{\aa}rd \etal, who focus on encrypting {\em classical} messages, there is a line of work, like~\cite{Leung02,OH03,HLM}, that considers key recycling in the context of authentication and/or encryption of {\em quantum} messages. However, common to almost all this work is that only {\em part} of the key can be re-used if no attack is detected, or a new but {\em shorter} key can be extracted. 
The only exceptions we know of are the two recent works by Garg \etal~\cite{GYZ16} and by Portmann~\cite{Por16}, which consider and analyze authentication schemes for quantum messages that do offer re-usability of the entire key in case no attack is detected. However, these schemes are based on techniques (like unitary designs) that require the honest users to perform quantum computations also when restricting to classical messages. Actually,~\cite{Por16} states it as an explicit open problem to ``find a prepare-and-measure scheme to encrypt and authenticate a classical message in a quantum state, so that all of the key may be recycled if it is successfully authenticated''. 
On the other hand, their schemes offer security against {\em superposition 
attacks}, where the adversary may trick the sender into authenticating a {\em superposition} of classical messages; this is something we do not consider here\,---\,as a matter of fact, it would be somewhat unnatural for us since such superposition attacks require the sender (wittingly or unwittingly) to hold a quantum computer, which is exactly what we want to avoid.

\section{Preliminaries}

\subsection{Basic Concepts of Quantum Information Theory}

We assume basic familiarity; we merely fix notation and terminology here.%

\paragraph{\sc Quantum states. } 

The state of a quantum system with state space $\H$ is specified by a {\em state vector} $\ket{\phi} \in \H$ in case of a pure state, or, more generally in case of a mixed state, by a {\em density matrix} $\rho$ acting on $\H$.  The
set of density matrices acting on $\H$ is denoted $\dens(\H)$. 
We typically identify different quantum systems by means of labels $\rA,\rB$ etc., and we write $\rho_\rA$ for the state of system $\rA$ and $\H_\rA$ for its state space, etc.  
The joint state of a bipartite system $\rA\rB$ is given by a density matrix $\rho_{\rA\rB}$ in $\dens(\H_\rA \otimes \H_\rB)$; it is then understood that $\rho_{\rA}$ and $\rho_{\rB}$ are the respective {\em reduced} density matrices $\rho_{\rA} = \tr_\rB(\rho_{\rA\rB})$ and $\rho_{\rB} = \tr_\rA(\rho_{\rA\rB})$. 

\smallskip

We also consider states that consist of a classical and a quantum part. Formally, $\rho_{X\rE} \in \dens(\H_X \otimes \H_\rE)$ is called a {\em cq}-state (for {\em c}lassical-{\em q}uantum), if it is of the form
$$
\rho_{X\rE} = \sum_{x \in \cal X} P_X(x) \proj{x} \otimes \rho_\rE^x \enspace,
$$ 
where $P_X: {\cal X} \to [0,1]$ is a probability distribution, $\set{\ket{x}}_{x \in \cal X}$ is a fixed orthonormal basis of $\H_X$, and $\rho_\rE^x \in \dens(\H_\rE)$. Throughout, we will slightly abuse notation and express this by writing $\rho_{X\rE} \in \dens({\cal X} \otimes \H_\rE)$. 

In the context of such a cq-state $\rho_{X\rE}$, an {\em event} $\Lambda$ is specified by means of a decomposition $\rho_{X\rE} = P[\Lambda] \cdot \rho_{X\rE|\Lambda} + P[\neg\Lambda] \cdot \rho_{X\rE|\neg\Lambda}$ with $P[\Lambda],P[\neg\Lambda] \geq 0$ and $\rho_{X\rE|\Lambda},\rho_{X\rE|\neg\Lambda}\in \dens({\cal X} \otimes \H_\rE)$. 
Associated to such an event $\Lambda$ is the {\em indicator random variable} $1_{\!\Lambda}$, i.e., the cq-state $\rho_{X 1_{\!\Lambda} \rE} \in \dens({\cal X} \otimes \set{0,1} \otimes \H_\rE)$, defined in the obvious way. 
Note that, for any cq-state $\rho_{X\rE}$ and any $x \in \cal X$, the event $X\!=\!x$ is naturally defined and $\rho_{X\rE|X=x} = \proj{x} \otimes \rho_\rE^x$ and $\rho_{\rE|X=x} = \rho_\rE^x$. 

\smallskip

If a state $\rho_{X}$ is purely classical, meaning that $\rho_{X} = \sum_x P_X(x) \proj{x}$ and expressed as $\rho_{X} \in \dens({\cal X})$, we may refer to standard probability notation so that probabilities like $P[X\!=\!x]$ are well understood. 
Finally, we write $\mu_{\cal X}$ for the {\em fully mixed state} $\mu_{\cal X} = \frac{1}{|{\cal X}|} \sum_x \proj{x} = \frac{1}{|{\cal X}|} \I_{\cal X} \in \dens({\cal X})$.

\paragraph{\sc General quantum operations. } 

Operations on quantum systems are described by {\em CPTP maps}. To emphasize that a CPTP map $\mathcal{Q}: \dens(\H_\rA) \to \dens(\H_{\rA'})$ acts on density matrices in $\dens(\H_\rA)$, we sometimes write $\mathcal{Q}_\rA$, and we say that it ``acts on $\rA$''. Also, we may write $\mathcal{Q}_{\rA\to\rA'}$ in order to be explicit about the range too. 
If $\mathcal Q$ is a CPTP map acting on $\rA$, we often abuse notation and simply write ${\cal
Q}_\rA(\rho_{\rA\rB})$ or $\rho_{{\cal Q}(\rA)\rB}$ for $\bigl({\cal Q}_\rA \otimes \id_\rB\bigr)(\rho_{\rA\rB})$, where
$\id_\rB$ is the identity map on $\dens(\H_\rB)$. 

\smallskip

In line with our notation for cq-states, $\mathcal{Q}: \dens({\cal X} \otimes \H_\rE) \to \dens({\cal X}' \otimes \H_{\rE'})$ is used to express that $\mathcal{Q}$ maps any cq-state $\rho_{X\rE} \in \dens({\cal X} \otimes \H_\rE)$ to a cq-state $\mathcal{Q}(\rho_{X'\rE})$ in $\dens({\cal X}' \otimes \H_{\rE'})$. 
We say that a CPTP map $\mathcal{Q}: \dens({\cal X} \otimes \H_\rE) \to \dens({\cal X} \otimes \H_{\rE'})$ is ``controlled by $X$ and acts on~$\rE\,$'' if on a cq-state $\rho_{X\rE}\in \dens({\cal X} \otimes \H_\rE)$ it acts as
$$
\mathcal{Q}(\rho_{X\rE}) = \sum_x P_X(x) \proj{x}\otimes \mathcal{Q}^x(\rho^x_\rE)
$$
with ``conditional'' CPTP maps $\mathcal{Q}^x: \dens(\H_\rE) \to \dens(\H_{\rE'})$. 
Note that in this case we write $\mathcal{Q}_{X\rE \to \rE'}$ rather than $\mathcal{Q}_{X\rE \to X\rE'}$, as it is understood that $\mathcal{Q}$ keeps $X$ alive. 
For concreteness, we require that such a $\mathcal{Q}$ is of the form $\mathcal{Q} = \sum_x {\cal P}_{\proj{x}} \otimes \mathcal{Q}^x$ where ${\cal P}_{\proj{x}}(\rho) = \proj{x}\,\rho\, \proj{x}$ for any $\rho \in \dens(\H_X)$.%
\footnote{This means that the system $X$ is actually measured (in the fixed basis $\set{\ket{x}}_{x \in \cal X}$). }
As such, $\mathcal{Q}$ is fully specified by means of the conditional CPTP maps $\mathcal{Q}^x$. 
Finally, for any function $f:{\cal X} \to {\cal Y}$, we say that $\mathcal{Q}: \dens({\cal X} \otimes \H_\rE) \to \dens({\cal X} \otimes \H_{\rE'})$ is ``controlled by $f(X)$'' if it is controlled by $X$, but $\mathcal{Q}^x = \mathcal{Q}^{x'}$ for any $x,x' \in \cal X$ with $f(x) = f(x')$.

\paragraph{\sc Markov-chain states. } 

Let $\rho_{XY\rE} \in \dens({\cal X} \otimes {\cal Y} \otimes \H_\rE)$ be a cq-state with two classical subsystems $X$ and~$Y$. Following~\cite{DFSS07}, we define $\rho_{X\leftrightarrow Y\leftrightarrow\rE}$ to be the ``Markov-chain state''
$$
\rho_{X\leftrightarrow Y\leftrightarrow\rE} := \sum_{x,y} P_{XY}(x,y) \proj{x} \otimes \proj{y} \otimes \rho_\rE^y 
$$
with $\rho_\rE^y = \sum_x P_{X|Y}(x|y)\, \rho_\rE^{x,y}$. If the state $\rho_{XY\rE}$ is clear from the context we write $X\leftrightarrow Y\leftrightarrow\rE$ to express that $\rho_{XY\rE} = \rho_{X\leftrightarrow Y\leftrightarrow\rE}$. It is an easy exercise to verify that the Markov-chain condition $X\leftrightarrow Y\leftrightarrow\rE$ holds if and only if $\rho_{XY\rE} = {\cal Q}_{Y\!\varnothing\to\rE}(\rho_{XY})$ for a CPTP map ${\cal Q}_{Y\!\varnothing\to \rE}: \dens({\cal Y}) \to \dens({\cal Y} \otimes \H_\rE)$ that is controlled by $Y$ and acts on the ``empty'' system $\varnothing$, i.e., the conditional maps act as ${\cal Q}^y_{\varnothing \to \rE}: \dens(\C) \to \dens(\H_\rE)$.

\paragraph{\sc Quantum measurements. } 

We model a {\em measurement} of a quantum system $\rA$ with outcome in $\cal X$ by means of a CPTP map $\mathcal{M}: \dens(\H_\rA) \to \dens({\cal X})$ that acts as
$$
{\cal M}(\rho) = \sum_{x\in \cal X} \tr(E_x \rho) \proj{x} \enspace,
$$
where $\{\ket{x}\}_{x \in \cal X}$ is a fixed basis, and $\set{ E_x }_{x \in \cal X}$ forms a {\em POVM}, i.e., a family of positive-semidefinite operators that add up to the identity matrix $\I_{\cal X}$. A measurement ${\cal M}: \dens({\cal Z} \otimes \H_\rA) \to \dens({\cal Z} \otimes {\cal X})$ is said to be a ``measurement of $\rA$ controlled by $Z$" if it is controlled by $Z$ and acts on $\rA$ as a CPTP map. It is easy to see that in this case the conditional CPTP maps ${\cal M}^z: \dens(\H_\rA) \to \dens({\cal X})$ are measurements too, referred to as  ``conditional measurements''. 

Note that whenever ${\cal M}: \dens(\H_Z \otimes \H_\rA) \to \dens({\cal X})$ is an {\em arbitrary} measurement of $Z$ and $\rA$ that is applied to a cq-state $\rho_{Z\rA} \in \dens({\cal Z} \otimes \H_\rA)$, we may assume that ${\cal M}$ first ``produces a copy of $Z$'', and thus we may assume without loss of generality that ${\cal M}: \dens({\cal Z} \otimes \H_\rA) \to \dens({\cal Z} \otimes {\cal X})$  is {\em controlled} by $Z$. 

For a given $n \in \N$, ${\cal M}^\BB_{\Theta\rA \to X}$ denotes the {\em BB84 measurement} of an $n$-qubit system $\rA$ controlled by~$\Theta$. Formally, for every $\theta \in \set{0,1}^n$ the corresponding conditional measurement is specified by the POVM $\set{H^\theta \proj{x} H^\theta}$ with $x$ ranging over $\set{0,1}^n$. Here, $H$ is the {\em Hadamard matrix}, and $H^\theta \ket{x}$ is a short hand for $H^{\theta_1} \ket{x_1} \otimes \cdots \otimes H^{\theta_n} \ket{x_n} \in \H_\rA = (\C^2)^{\otimes n}$, where $\set{\ket{0},\ket{1}}$ is the {\em computational basis} of the qubit system $\C^2$.

\paragraph{\sc Trace Distance. } 

We capture the distance between two states $\rho,\sigma \in \dens(\H)$ in terms of their \emph{trace distance} $\delta(\rho, \sigma):= \frac12\|\rho-\sigma\|_1$, where \smash{$\|K\|_1:= \tr\bigl(\sqrt{K^\dagger K}\bigr)$} is the \emph{trace norm} of an arbitrary operator $K$. If the states $\rho_{\rA}$ and $\rho_{\rA'}$ are clear from context, we may write $\delta(\rA, \rA')$ instead of $\delta(\rho_{\rA}, \rho_{\rA'})$. Also, for any cq-state $\rho_{X\rE}$ in $\dens({\cal X} \otimes \H_\rE)$, we write $\delta(X, \UNIF_{\cal X}|\rE)$ as a short hand for $\delta(\rho_{X\rE}, \mu_{\cal X} \otimes \rho_\rE)$. Obviously, $\delta(X, \UNIF_{\cal X}|\rE)$ captures how far away $X$ is from uniformly random on $\cal X$ when given the quantum system~$\rE$. 

It is well known that the trace distance is monotone under CPTP maps, and it is easy to see that if two cq-states $\rho_{X\rE}, \sigma_{X\rE} \in \dens({\cal X} \otimes \H_\rE)$ coincide on their classical subsystems, meaning that $\rho_X = \sigma_X$, then $\delta(\rho_{X\rE}, \sigma_{X\rE})$ decomposes into 
$\delta(\rho_{X\rE}, \sigma_{X\rE}) = \sum_x P_X(x) \, \delta(\rho^x_{\rE}, \sigma^x_{\rE})$.

\subsection{The Guessing Probability}

An important concept in the technical analysis of our scheme(s) is the following notion of guessing probability, which is strongly related to the (conditional) min-entropy as introduced by Renner~\cite{Ren10}, but turns out to be more convenient to work with for our purpose. 
Let $\rho_{X\rE} \in \dens({\cal X} \otimes \H_\rE)$ be a cq-state. 

\begin{definition}\label{def:guess}
The {\em guessing probability} of $X$ given $\rE$ is 
$$
\guess(X|\rE) := \max_{\cal M} P[{\cal M}(\rE) \!=\! X] \enspace,
$$
where the maximum is over all measurements $\cal M: \dens(\H_\rE) \to \dens({\cal X})$ of $\rE$ with outcome in $\cal X$.%
\footnote{By our conventions, the probability $P[{\cal M}(\rE) \!=\! X]$ is to be understood as  $P[X' \!=\! X]$ for the (purely classical) state $\rho_{XX'} = \rho_{X{\cal M}(\rE)} = (\id_X \otimes {\cal M})(\rho_{X\rE}) \in \dens({\cal X} \otimes {\cal X})$. }
\end{definition}
Note that if $\Lambda$ is an event, then $\guess(X|\rE,\Lambda)$ is naturally defined by means of applying the above to the ``conditional state" $\rho_{X\rE|\Lambda} \in \dens({\cal X} \otimes \H_\rE)$. 

\smallskip

We will make use of the following elementary properties of the guessing probability. In all the statements, it is understood that $\rho_{X\rE} \in \dens({\cal X} \otimes \H_\rE)$, respectively $\rho_{XZ\rE} \in \dens({\cal X} \otimes {\cal Z} \otimes \H_\rE)$ in Property~\ref{prop:expansion}. 

\begin{property}\label{prop:monotonicity}
$\guess(X|{\cal Q}(\rE)) \leq \guess(X|\rE)$ for any CPTP map $\cal Q$ acting on $\rE$. 
\end{property}

\begin{property}\label{prop:expansion}
$\guess(X|Z \rE) = \sum_z P_Z(z) \, \guess(X|\rE,Z\!=\!z)$. 
\end{property}

\begin{property}\label{prop:cond}
$\guess(X|\rE,\Lambda) \leq \guess(X|\rE)/P[\Lambda]$ for any event $\Lambda$. 
\end{property}
Note that Property~\ref{prop:expansion} implies that $\guess(X|\rE,\Lambda) \leq \guess(X|1_\Lambda \rE)/P[\Lambda]$, but the statement of  Property~\ref{prop:cond} is stronger since 
$\guess(X|\rE) \leq \guess(X|1_\Lambda \rE)$. 


\begin{proof}[of Property~\ref{prop:cond}]
It holds that%
\footnote{Here and throughout, for operators $K$ and $L$, the inequality $K \leq L$ means that $L-K$ is positive-semidefinite. } 
 $P[\Lambda] \cdot \rho_{X \rE|\Lambda} \leq \rho_{X \rE}$, and hence that for any measurement ${\cal M}$ on $\rE$
$$
P[\Lambda] \cdot P[{\cal M}(\rE) \!=\! X|\Lambda] \leq P[{\cal M}(\rE) \!=\! X] \leq \guess(X|\rE) \enspace,
$$
which implies the claim. 
\qed
\end{proof}

\begin{property}\label{prop:bound}
There exists $\sigma_\rE \in \dens(\H_\rE)$ so that 
$$\rho_{X\rE} \leq \guess(X|\rE) \cdot \I_{\cal X} \otimes \sigma_\rE = \guess(X|\rE) \cdot |{\cal X}| \cdot \mu_{\cal X} \otimes \sigma_\rE \enspace.$$ 
\end{property}

\begin{proof}
The claim follows from Renner's original definition of the conditional min-entropy as
$$
H_\infty(X|\rE) := \max_{\sigma_\rE} \max_\lambda \Set{\lambda}{\rho_{X\rE} \leq 2^{-\lambda}\cdot \I_{\cal X} \otimes \sigma_\rE}
$$
and the identity $H_\infty(X|\rE) = -\log \guess(X|\rE)$, as shown in~\cite{KRS09}.  
\qed
\end{proof}

\section{Enabling Tools}

In this section, we introduce and discuss the main technical tools for the constructions
and analyses of our key-recycling authentication and encryption schemes.

\subsection{On Guessing the Outcome of Quantum Measurements}\label{sec:GuessingGames}

We consider different ``guessing games", where one or two players need to guess the outcome of a quantum measurement. The bounds are derived by means of the techniques of~\cite{TFKW13}.

\paragraph{\sc Two-player guessing. }

Here, we consider a game where two parties, Bob and Charlie, need to {\em simultaneously} and without communication guess the outcome of BB84 measurements performed by Alice (on $n$ qubits prepared by Bob and Charlie), when given the bases that Alice chose. 
This is very similar to the {\em monogamy game} introduced and studied in~\cite{TFKW13}, but in the version we consider here, the sequence of bases is not chosen from $\set{0,1}^n$ but from a code $\code \subset \set{0,1}^n$ with minimal distance $d$. It is useful to think of $d$ to be much larger than $\log|\code|$, i.e., the dimension of the code in case of a linear code. The following shows that in case of a uniformly random choice of the bases in $\code$, Bob and Charlie cannot do much better than to agree on a guess for the bases and to give Alice qubits in those bases. 

\begin{proposition}\label{prop:monogamy}
Let $\H_\rA$ be a $n$-qubit system, and let $\H_\rB$ and $\H_\rC$ be arbitrary quantum systems. Consider a state $\rho_{\Theta \rA\rB\rC} = \mu_{\code} \otimes \rho_{\rA\rB\rC} \in \dens(\code \otimes \H_\rA \otimes \H_\rB \otimes \H_\rC)$,
and let 
$$
\rho_{\Theta X X' X''} = {\cal N}_{\Theta \rC \to X''} \circ {\cal N}_{\Theta \rB \to X'} \circ {\cal M}^\BB_{\Theta \rA \to X}  \bigl(\rho_{\Theta\rA\rB\rC}\bigr)
$$
where ${\cal M}^\BB_{\Theta \rA \to X}$ is the BB84-measurement of the system $\rA$ (controlled by $\Theta$), and ${\cal N}_{\Theta \rB \to X'}$ and ${\cal N}_{\Theta \rC \to X''}$ are arbitrary (possibly different) measurements of the respective systems $\rB$ and $\rC$, both controlled by $\Theta$. Then, it holds that
$$
P[X' \!=\! X \wedge X'' \!=\! X] \leq \frac{1}{|\code|} + \frac{1}{2^{d/2}} \enspace.
$$
\end{proposition}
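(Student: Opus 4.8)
The plan is to follow the operator-norm strategy of \cite{TFKW13}. First I would rewrite the winning probability as the expectation of a single positive operator. Writing $A^\theta_x := H^\theta \proj{x} H^\theta$ for Alice's conditional BB84 measurement operators, and $\{B^\theta_{x'}\}_{x'}$, $\{C^\theta_{x''}\}_{x''}$ for the conditional POVMs of Bob and Charlie, the independence of $\Theta$ from $\rA\rB\rC$ together with the fact that all three measurements are controlled by $\Theta$ yields
\[
P[X' \!=\! X \wedge X'' \!=\! X] = \tr(\Pi\, \rho_{\rA\rB\rC}), \qquad \Pi := \frac{1}{|\code|}\sum_{\theta \in \code} \Pi_\theta, \quad \Pi_\theta := \sum_x A^\theta_x \otimes B^\theta_x \otimes C^\theta_x .
\]
Since $\Pi \geq 0$ and $\rho_{\rA\rB\rC}$ is a density operator, this is at most the operator norm $\norm{\Pi}$ (its largest eigenvalue), so it suffices to bound $\norm{\Pi}$ uniformly over all choices of Bob's and Charlie's measurements. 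Invoking Naimark's dilation, I would assume without loss of generality that $\{B^\theta_{x'}\}$ and $\{C^\theta_{x''}\}$ are projective (the $A^\theta_x$ are already rank-one projections); then each $\Pi_\theta$ is an orthogonal projection, since $A^\theta_x A^\theta_{y} = \delta_{xy}A^\theta_x$ and the analogous relations for $B,C$ force $\Pi_\theta^2 = \Pi_\theta$.

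Next I would bound $\norm{\sum_{\theta}\Pi_\theta}$ using the norm inequality for sums of projections from \cite{TFKW13}: given a family of permutations $\pi_1 = \id, \pi_2, \dots, \pi_{|\code|}$ of $\code$ covering every ordered pair of codewords exactly once (a sharply transitive set, which always exists — e.g.\ translations $\theta \mapsto \theta\oplus c$ if $\code$ is linear, or the Cayley table of $\Z_{|\code|}$ transported along any bijection in general, all non-identity members being fixed-point-free), one has
\[
\Bigl\|\sum_{\theta \in \code}\Pi_\theta\Bigr\| \;\leq\; \sum_{k} \max_{\theta}\, \norm{\Pi_\theta \Pi_{\pi_k(\theta)}} \;=\; 1 + \sum_{k \geq 2}\max_{\theta}\norm{\Pi_\theta \Pi_{\pi_k(\theta)}},
\]
where the $k\!=\!1$ term contributes $\max_\theta\norm{\Pi_\theta} = 1$. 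This reduces everything to the pairwise overlaps $\norm{\Pi_\theta \Pi_{\theta'}}$ for distinct codewords $\theta \neq \theta'$.

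For these overlaps I would invoke the second lemma of \cite{TFKW13}, which exploits the block structure of the projective $B$ and $C$ to give $\norm{\Pi_\theta \Pi_{\theta'}} \leq \max_{x,x'} \norm{A^\theta_x A^{\theta'}_{x'}}$. Since $A^\theta_x = \proj{e^\theta_x}$ with $\ket{e^\theta_x} = H^\theta\ket{x}$, the right-hand side equals $\max_{x,x'}|\langle x| H^{\theta\oplus\theta'}|x'\rangle|$; on the $w = d_H(\theta,\theta')$ coordinates where $\theta,\theta'$ differ each Hadamard contributes a factor $\tfrac1{\sqrt2}$, while the agreeing coordinates force $x,x'$ to coincide there, so the maximum is exactly $2^{-w/2} \leq 2^{-d/2}$ by the minimum-distance hypothesis. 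Each of the $|\code|-1$ non-identity permutation terms is then at most $2^{-d/2}$, so
\[
\norm{\Pi} = \frac{1}{|\code|}\Bigl\|\sum_\theta \Pi_\theta\Bigr\| \leq \frac{1}{|\code|}\bigl(1 + (|\code|-1)\,2^{-d/2}\bigr) \leq \frac{1}{|\code|} + \frac{1}{2^{d/2}},
\]
which is the claimed bound.

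The step I expect to be the main obstacle is the overlap estimate $\norm{\Pi_\theta\Pi_{\theta'}}\le \max_{x,x'}\norm{A^\theta_x A^{\theta'}_{x'}}$. The naive routes both fail: inserting the projections of $B$ and $C$ and applying Cauchy--Schwarz over the $2^n$ outcomes loses a spurious factor, while bounding $\Pi_\theta\Pi_{\theta'}$ through the unitary $H^{\theta\oplus\theta'}$ discards the distance entirely and yields only $1$. Obtaining the clean entrywise maximum requires the more careful argument of \cite{TFKW13}, which combines the mutual orthogonality of the projections $B^\theta_x\otimes C^\theta_x$ with the rank-one structure of the $A^\theta_x$; checking that this lemma applies verbatim here (it is insensitive to the code restriction) is the content-bearing part. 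A secondary point to get right is that the permutation family is genuinely pair-covering with fixed-point-free non-identity members, so that only pairs of \emph{distinct} codewords — and hence the minimum distance $d$ — enter the sum.
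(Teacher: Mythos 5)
Your proposal is correct and follows essentially the same route as the paper's own proof: reduce the winning probability to the operator norm of $\frac{1}{|\code|}\sum_\theta \Pi_\theta$, apply Naimark to make Bob's and Charlie's measurements projective, invoke the norm bound of \cite{TFKW13} for sums of projections via a pair-covering permutation family, and bound the pairwise overlaps by $\|\Pi_\theta\Pi_{\theta'}\| \leq 2^{-d_H(\theta,\theta')/2}$ exactly as in the proof of Theorem 3.4 of \cite{TFKW13}. The only (minor) difference is that the paper instantiates the permutations as XOR-shifts $\theta \mapsto \theta\oplus\delta$, implicitly treating $\code$ as linear, whereas your generic sharply transitive family covers arbitrary codes; your explicit computation of the Hadamard overlap $2^{-d_H(\theta,\theta')/2}$ matches what the paper cites from \cite{TFKW13}.
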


\begin{proof}
The proof uses the techniques from~\cite{TFKW13}. 
By Naimark's theorem, we may assume without loss of generality that the conditional measurements ${\cal N}_{\rB \to X'}^\theta$ and ${\cal N}_{\rC \to X''}^\theta$ are specified by families $\{P_x^{\theta}\}_{x}$ and $\{Q_x^{\theta}\}_x$ of {\em projections}. 
Then, defining for every $\theta \in \code$ the projection $\Pi^\theta = \sum_x H^\theta \proj{x}H^\theta \otimes P^\theta_x \otimes Q^\theta_x$, we see that
$$
P[X' \!=\! X \wedge X'' \!=\! X] \leq \frac{1}{|\code|} \biggl\| \sum_\theta \Pi^\theta \biggr\| \leq \frac{1}{|\code|} \sum_\delta \max_\theta \bigl\| \Pi^\theta \Pi^{\theta \oplus \delta} \bigr\| \enspace,
$$
where $\|\cdot\|$ refers to the standard operator norm, and the second inequality is by Lemma 2.2 in~\cite{TFKW13}. For any $\theta,\theta' \in \code$, bounding $\Pi^\theta$ and $\Pi^{\theta'}$ by 
$$
\Pi^\theta \leq \Gamma^\theta:= \sum_x H^\theta \proj{x}H^\theta \otimes P^\theta_x \otimes \I 
$$
and
$$
\Pi^{\theta'} \leq \Delta^{\theta'}:= \sum_x H^{\theta'} \proj{x}H^{\theta'} \otimes \I \otimes Q^{\theta'}_x \enspace,
$$ 
it is shown in~\cite{TFKW13} (in the proof of Theorem 3.4) that  
$$
\bigl\| \Pi^\theta \Pi^{\theta'} \bigr\| \leq \bigl\| \Gamma^\theta \Delta^{\theta'} \bigr\| \leq \frac{1}{2^{d_H(\theta,\theta')/2}} \leq \frac{1}{2^{d/2}}
$$ 
where the last inequality holds unless $\theta = \theta'$, from which the claim follows. 
\qed
\end{proof}

\begin{remark}
If we restrict $\H_\rB$ to be a $n$-qubit system too, and replace the (arbitrary) measurement ${\cal N}_{\Theta \rB}$ by a BB84 measurement ${\cal M}^\BB_{\Theta \rB}$, i.e., ``Bob measures correctly", then we get 
$$
P[X' \!=\! X \wedge X'' \!=\! X] \leq \frac{1}{|\code|} + \frac{1}{2^d} \enspace.
$$
\end{remark}

\paragraph{\sc Two-player guessing with quantum side information. }

Now, we consider a version of the game where Alice's choice for the bases is not uniformly random, and, additionally, Bob and Charlie may hold some quantum side information on Alice's choice at the time when they can prepare the initial state (for Alice, Bob and Charlie). 

\begin{corollary}\label{cor:monogamy}
Let $\H_\rA$ be a $n$-qubit system, and let $\H_\rB, \H_\rC$ and $\H_\rE$ be arbitrary quantum systems. Consider a state $\rho_{\Theta \rE} \in \dens(\code \otimes \H_\rE)$, and let 
$$
\rho_{\Theta\rA\rB\rC} = {\cal Q}_{\rE\to\rA\rB\rC} \bigl(\rho_{\Theta \rE}\bigr) \in \dens(\code \otimes \H_\rA \otimes \H_\rB \otimes \H_\rC)
$$
where ${\cal Q}_{\rE\to\rA\rB\rC}$ is a CPTP map acting on $\rE$ (only), 
and let 
$$
\rho_{\Theta X X' X''} = {\cal N}_{\Theta \rC \to X''} \circ {\cal N}_{\Theta \rB \to X'} \circ {\cal M}^\BB_{\Theta \rA \to X}  \bigl(\rho_{\Theta\rA\rB\rC}\bigr)
$$
as in Proposition~\ref{prop:monogamy} above. Then, it holds that
$$
P[X' \!=\! X \wedge X'' \!=\! X] \leq \guess(\Theta|\rE) + \frac{\guess(\Theta|\rE) \cdot |\code|}{2^{d/2}} \enspace.
$$
\end{corollary}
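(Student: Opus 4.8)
The plan is to reduce the statement to Proposition~\ref{prop:monogamy} by replacing the adversary's correlated side information $\rE$ with a fixed, $\Theta$-independent state, at the cost of a factor $\guess(\Theta|\rE)$. The engine for this is Property~\ref{prop:bound}, which converts the guessing probability into an operator inequality that can then be pushed through the CPTP map and the measurements.

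First I would invoke Property~\ref{prop:bound} on $\rho_{\Theta\rE}$ to obtain a state $\sigma_\rE \in \dens(\H_\rE)$ with $\rho_{\Theta\rE} \leq \guess(\Theta|\rE)\cdot\I_\code\otimes\sigma_\rE$. Since ${\cal Q}_{\rE\to\rA\rB\rC}$ is completely positive it is monotone with respect to the positive-semidefinite ordering, so applying $\id_\Theta\otimes{\cal Q}_{\rE\to\rA\rB\rC}$ to both sides preserves the inequality and yields $\rho_{\Theta\rA\rB\rC} \leq \guess(\Theta|\rE)\cdot\I_\code\otimes\sigma_{\rA\rB\rC}$, where $\sigma_{\rA\rB\rC} := {\cal Q}_{\rE\to\rA\rB\rC}(\sigma_\rE)$ is again a density matrix because ${\cal Q}$ is trace preserving. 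This is the key step: it strips the $\Theta$-dependence out of the registers held by Bob and Charlie, turning their joint state into a product with the (now irrelevant) uniform $\Theta$.

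Next, exactly as in the proof of Proposition~\ref{prop:monogamy}, I would apply Naimark's theorem to assume the conditional measurements on $\rB$ and $\rC$ are given by projection families $\{P^\theta_x\}_x$ and $\{Q^\theta_x\}_x$, set $\Pi^\theta = \sum_x H^\theta\proj{x}H^\theta\otimes P^\theta_x\otimes Q^\theta_x$ and $\Pi := \sum_\theta \proj{\theta}\otimes\Pi^\theta \geq 0$, so that $P[X'\!=\!X\wedge X''\!=\!X] = \tr(\Pi\,\rho_{\Theta\rA\rB\rC})$. Since $\Pi\geq 0$ and taking the trace against a fixed positive operator is order preserving (if $A\leq B$ then $\tr(\Pi A)\leq\tr(\Pi B)$), the operator inequality above gives $P[X'\!=\!X\wedge X''\!=\!X] \leq \guess(\Theta|\rE)\cdot\tr\bigl(\Pi\,(\I_\code\otimes\sigma_{\rA\rB\rC})\bigr) = \guess(\Theta|\rE)\cdot\sum_\theta\tr(\Pi^\theta\sigma_{\rA\rB\rC})$.

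Finally, the residual sum $\sum_\theta\tr(\Pi^\theta\sigma_{\rA\rB\rC})$ is precisely $|\code|$ times the success probability of the guessing game of Proposition~\ref{prop:monogamy} played on the product state $\mu_\code\otimes\sigma_{\rA\rB\rC}$ with these same measurements; applying that proposition as a black box gives $\sum_\theta\tr(\Pi^\theta\sigma_{\rA\rB\rC}) \leq |\code|\bigl(\tfrac{1}{|\code|}+\tfrac{1}{2^{d/2}}\bigr) = 1 + |\code|/2^{d/2}$, which yields the claimed bound. There is no deep obstacle here; the points that I expect to need care are the justification that the CPTP map preserves the inequality and that $\tr(\Pi\,\cdot)$ is order preserving (both standard facts about completely positive maps and traces of products of positive operators), together with the bookkeeping that the projections produced by Naimark's dilation are exactly the ones to which Proposition~\ref{prop:monogamy} applies.
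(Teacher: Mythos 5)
Your proposal is correct and follows essentially the same route as the paper: both proofs invoke Property~\ref{prop:bound} to dominate $\rho_{\Theta\rE}$ by $\guess(\Theta|\rE)\cdot|\code|\cdot\mu_\code\otimes\sigma_\rE$, push this operator inequality through the subsequent (completely positive) processing, and then apply Proposition~\ref{prop:monogamy} as a black box to the resulting product state. The only difference is presentational: where you make the positivity argument explicit via the Naimark projections, the operator $\Pi$, and order-preservation of the trace, the paper compresses exactly this into the observation that the entire chain from $\rho_{\Theta\rE}$ to the indicator $\rho_{1_{X=X'\wedge X=X''}}$ is a single CPTP map.
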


\begin{proof}
By Proposition~\ref{prop:monogamy}, the claim holds as $P[X' \!=\! X \wedge X'' \!=\! X] \leq 1/|\code| + 1/2^{d/2}$ for the special case where $\rho_{\Theta \rE}$ is of the form $\rho_{\Theta \rE} = \mu_\code \otimes \sigma_\rE$. Furthermore, by Property~\ref{prop:bound} we know that an arbitrary $\rho_{\Theta \rE}\in \dens(\code \otimes \H_\rE)$ is bounded by 
$$
\rho_{\Theta \rE} \leq \guess(\Theta|\rE) \cdot |\code| \cdot \mu_\code \otimes \sigma_\rE \enspace.
$$
Therefore, since the composed map
$$
\dens({\cal C} \otimes \H_\rE) \to \dens(\set{0,1}), \;
\rho_{\Theta \rE} \mapsto \rho_{\Theta\rA\rB\rC} \mapsto \rho_{X X' X''} \mapsto \rho_{1_{X=X' \wedge X=X''}}
$$
is still a CPTP map, it holds that for arbitrary $\rho_{\Theta \rE} \in \dens(\code \otimes \H_\rE)$
$$
P[X' \!=\! X \wedge X'' \!=\! X] \leq \guess(\Theta|\rE) \cdot |\code| \cdot \biggl(\frac{1}{|\code|} + \frac{1}{2^{d/2}} \biggr) \enspace,
$$
which proves the claim. 
\qed
\end{proof}

\begin{remark}
Similarly to the remark above, the bound relaxes to 
$$
P[X' \!=\! X \wedge X'' \!=\! X] \leq \guess(\Theta|\rE) + \frac{\guess(\Theta|\rE) \cdot |\code|}{2^d} \enspace,
$$
when ``Bob measure correctly". 
\end{remark}

\paragraph{\sc Single-player guessing (with quantum side information). }

Corollary~\ref{cor:monogamy} immediately gives us control also over a slightly different game, where only one party needs to guess Alice's measurement outcome, but here he is {\em not} given the bases. Indeed, any strategy here gives a strategy for the above simultaneous-guessing game, simply by ``pre-measuring" $\rB$, and having Bob and Charlie each keep a copy of the measurement outcome. 

\begin{corollary}\label{cor:guessing}
Let $\H_\rA$ be a $n$-qubit system, and let $\H_\rB$ and $\H_\rE$ be arbitrary quantum systems. Consider a state $\rho_{\Theta\rE}  \in \dens(\code \otimes \H_\rE)$ 
and let 
$$
\rho_{\Theta\rA\rB} = {\cal Q}_{\rE\to\rA\rB} \bigl(\rho_{\Theta \rE}\bigr) \in \dens(\code \otimes \H_\rA \otimes \H_\rB)
$$
where ${\cal Q}_{\rE\to\rA\rB}$ is a CPTP map acting on $\rE$, 
and let 
$$
\rho_{\Theta X X''} =  {\cal N}_{\rB \to X''} \circ {\cal M}^\BB_{\Theta \rA \to X} \bigl(\rho_{\Theta \rA\rB}\bigr)
$$
where ${\cal N}_{\rB \to X''}$ is an arbitrary measurement of $\rB$ (with no access to $\Theta$). Then, it holds that
$$
P[X'' \!=\! X] \leq \guess(\Theta|\rE) + \frac{\guess(\Theta|\rE) \cdot |\code|}{2^{d/2}} \enspace.
$$
In other words, for the state $\rho_{\Theta X \rB} = {\cal M}^\BB_{\Theta \rA \to X} (\rho_{\Theta \rA\rB})$ we have that
$$
\guess(X|\rB) \leq \guess(\Theta|\rE) + \frac{\guess(\Theta|\rE) \cdot |\code|}{2^{d/2}} \enspace.
$$
\end{corollary}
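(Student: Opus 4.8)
The plan is to reduce this single-player guessing game to the two-player game already bounded in Corollary~\ref{cor:monogamy}. The crucial observation is the one emphasized in the statement: the measurement ${\cal N}_{\rB \to X''}$ does \emph{not} depend on $\Theta$. This means it can be carried out during the state-preparation phase---folded into the CPTP map acting on $\rE$---rather than after the BB84 measurement of $\rA$, and this is exactly what lets a single-player strategy masquerade as a (trivial) two-player strategy.

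Concretely, starting from ${\cal Q}_{\rE\to\rA\rB}$ together with the given ${\cal N}_{\rB\to X''}$, I would build a new CPTP map ${\cal Q}'_{\rE\to\rA\rB'\rC}$ as follows: first apply ${\cal Q}_{\rE\to\rA\rB}$, then apply ${\cal N}_{\rB\to X''}$ to obtain a classical value, and finally copy that classical value into two fresh classical registers labelled $\rB'$ and $\rC$. Since classical information can be freely copied and none of these post-processing steps touch $\rA$ or consult $\Theta$, the composed map ${\cal Q}'$ still acts on $\rE$ only and produces a state $\rho_{\Theta\rA\rB'\rC}\in\dens(\code\otimes\H_\rA\otimes\H_{\rB'}\otimes\H_\rC)$ of exactly the shape required by Corollary~\ref{cor:monogamy}.

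In the resulting two-player game I would let Bob's measurement ${\cal N}_{\Theta\rB'\to X'}$ and Charlie's measurement ${\cal N}_{\Theta\rC\to X''}$ each simply read out their stored classical register while ignoring $\Theta$; these are legitimate $\Theta$-controlled measurements, being the degenerate case in which all conditional measurements coincide. By construction $X'=X''$ with certainty, and both equal the outcome of the original single-player measurement, so $P[X'\!=\!X\wedge X''\!=\!X]=P[X''\!=\!X]$ in this constructed game. Because the underlying state $\rho_{\Theta\rE}$ is untouched, $\guess(\Theta|\rE)$ is unchanged, and Corollary~\ref{cor:monogamy} applied to the constructed game delivers precisely the claimed bound on $P[X''\!=\!X]$. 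For the ``in other words'' reformulation I would then note that the bound holds for \emph{every} measurement ${\cal N}_{\rB\to X''}$, so taking the maximum over ${\cal N}$ turns $P[X''\!=\!X]$ into $\guess(X|\rB)$.

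I expect the only point demanding any care to be the bookkeeping that ${\cal Q}'$ genuinely satisfies the hypotheses of Corollary~\ref{cor:monogamy}---that it acts on $\rE$ alone and that the two readout operations qualify as $\Theta$-controlled measurements in the formal sense fixed in the preliminaries---rather than any genuine mathematical difficulty. All the quantitative content is inherited verbatim from the two-player bound; the reduction merely re-interprets one classical guess as two identical copies.
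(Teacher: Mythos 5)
Your proposal is correct and is essentially the paper's own argument: the paper also reduces to Corollary~\ref{cor:monogamy} by ``pre-measuring'' $\rB$ (exploiting that ${\cal N}_{\rB\to X''}$ does not depend on $\Theta$, and that it therefore commutes with the BB84 measurement and can be folded into the preparation map) and having Bob and Charlie each keep a copy of the classical outcome, so that the two guesses coincide with the single guess. Your additional bookkeeping---that the copy map acts on $\rE$ alone, that the readouts are degenerate $\Theta$-controlled measurements, and that maximizing over ${\cal N}$ yields the $\guess(X|\rB)$ reformulation---is exactly the right way to make the paper's one-sentence reduction rigorous.
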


\begin{remark}
If we restrict the side information $\rE$ to be {\em classical} then, using slightly different techniques, we can improve the bounds from Corollary~\ref{cor:monogamy} and~\ref{cor:guessing} to
$$
\guess(\Theta|\rE) + \frac{1}{2^{d/2}} \enspace.
$$
Whether this improved bound also holds in case of quantum side information is an open question. 
\end{remark}

\subsection{Hash Functions with Message-Independence and Key-Privacy}\label{sec:key-privacy}

The goal of key-recycling is to be able to {\em re-use} a cryptographic key. For this to be possible, it is necessary\,---\,actually, not necessary but sufficient\,---\,that a key {\em stays secure}, i.e., that the primitive that uses the key does {\em not} reveal anything on the key, or only very little. We introduce here a general notion that captures this, i.e., that ensures that the key stays secure {\em as long as} there is {\em enough uncertainty} in the message the primitive is applied to\,---\,in our construction(s), this uncertainty will then be derived from the quantum part. 

Consider a keyed hash function $\hash: \K \times {\cal X} \to {\cal Y}$ with key space $\K$, message space $\cal X$, and range $\cal Y$. We define the following properties on such a hash function. 

\begin{definition}\label{def:uniformity}
We say that $\hash$ is {\em message-independent} if for a uniformly random key $K$ in $\K$, the distribution of the hash value $Y = \hash(K,x)$ is independent of the message $x \in \cal X$. 
And, we say that $\hash$ is {\em uniform} if it is message-independent and $Y = \hash(K,x)$ is uniformly random on $\cal Y$. 
\end{definition}
Thus, message-independence simply ensures that if the key is uniformly random and independent of the message, then the hash of the message is independent of the message too. The key-privacy property below on the other hand ensures that for any adversary that has arbitrary but {\em limited} information on the message and the hash value\,---\,but no direct information on the key\,---\,has (almost) no information on the key. 

\begin{definition}\label{def:key-privacy}
We say that $\hash$ offers {\em $\nu$-key-privacy} if for any state $\rho_{KXY\rE}$ in $\dens(\K \otimes {\cal X} \otimes {\cal Y} \otimes \H_\rE)$ with the properties that $\rho_{KX} = \mu_\K \otimes \rho_X$, $Y = \hash(K,X)$ and
$K\leftrightarrow XY\leftrightarrow \rE$,  
it holds that
$$
\delta(K,\UNIF_\K|Y\rE) \leq \frac{\nu}{2} \sqrt{\guess(X|Y\rE)\cdot|{\cal Y}|} \enspace.
$$
We say that $\hash$ offers {\em ideal key-privacy} if it offers $1$-key-privacy.
\end{definition}

\begin{remark}\label{rem:msgind}
Note that if $\hash$ is message-independent then for $X,Y$ and $\rE$ as above in Definition~\ref{def:key-privacy}, we have that $\guess(X|Y\rE) = \guess(X|\rE)$. 
\end{remark}

Not so surprisingly, the joint notion of uniformity and key-privacy is closely related to that of a {\em strong extractor}~\cite{NZ96}. Indeed, if $\hash$ is uniform and offers key-privacy then it is a strong extractor: for $\rho_{KXE} = \mu_\K \otimes \rho_{XE}$ and $Y = \hash(K,X)$, the condition on $\rho_{KXY\rE}$ in Definition~\ref{def:key-privacy} is satisfied, and thus we have the promised bound on $\delta(\rho_{KY\rE},\mu_\K \otimes \rho_{Y\rE}) = \delta(\rho_{KY\rE},\mu_\K \otimes \mu_{\cal Y} \otimes \rho_\rE)$, where the equality is due to uniformity. 
As such, \cite{RT00} shows that the required bound on $\delta(K,\UNIF_\K|Y\rE)$ is the best one can hope for. 
On the other hand, the following shows that from every strong extractor we can easily construct a hash function that offers uniformity and key-privacy. 

\begin{proposition}\label{prop:hashconstruction}
Let $\Ext: \K \times {\cal X} \to {\cal Y}$ be a strong extractor, i.e., for 
\mbox{$\rho_{K X \rE} = \mu_\K \otimes \rho_{X \rE} \in \dens(\K \otimes {\cal X} \otimes \H_\rE)$} and for $Y$ computed as $Y = \Ext(K,X)$ it holds that $\delta(\rho_{KY\rE},\mu_\K \otimes \rho_{Y\rE}) \leq \frac{\nu}{2} \sqrt{\guess(X|\rE)\cdot|{\cal Y}|}$. 
Furthermore, we assume that the range $\cal Y$ forms a group. Then, the keyed hash function%
\footnote{Here, and similarly in other occasions, $k\|k'$ is simply a synonym for the element $(k,k')$ in the Cartesian product of, here, $\K$ and ${\cal Y}$, and is mainly used to smoothen notation and avoid expressions like $\bigl((k,k'),x\bigr)$. }
$$
\hash: (\K \times {\cal Y}) \times {\cal X} \to {\cal Y}, \, (k\|k',x) \mapsto \Ext(k,x)+k' 
$$
with key space $\K \times {\cal Y}$ satisfies uniformity and $\nu$-key-privacy. 
\end{proposition}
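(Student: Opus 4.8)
The plan is to prove the two assertions separately: uniformity is immediate from the one-time pad, and key-privacy reduces to the strong-extractor property of $\Ext$ after a trace-distance-preserving relabeling.

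\emph{Uniformity.} For a uniformly random key $(K,K') \in \K\times{\cal Y}$ and any fixed $x$, the value $\hash\bigl((K,K'),x\bigr) = \Ext(K,x)+K'$ is the sum of $\Ext(K,x)$ and an independent uniform group element $K'$. Since ${\cal Y}$ is a group, this sum is uniform on ${\cal Y}$, and in particular its distribution does not depend on $x$. Hence $\hash$ is uniform, and a fortiori message-independent in the sense of Definition~\ref{def:uniformity}.

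\emph{Key-privacy.} Fix a state $\rho_{KK'XY\rE}$ as required in Definition~\ref{def:key-privacy} for the key $(K,K')$, so that $\rho_{KK'X}=\mu_\K\otimes\mu_{\cal Y}\otimes\rho_X$, $Y=\Ext(K,X)+K'$, and $(K,K')\leftrightarrow XY\leftrightarrow\rE$; the goal is to bound $\delta(\rho_{KK'Y\rE},\mu_\K\otimes\mu_{\cal Y}\otimes\rho_{Y\rE})$. The key step is to pass from the register $K'$ to $W:=Y-K'=\Ext(K,X)$. Because ${\cal Y}$ is a group, replacing $K'$ by $W$ is a bijective relabeling of the classical register controlled by $Y$, hence a unitary; it preserves trace distance, carries the real state $\rho_{KK'Y\rE}$ to $\rho_{KWY\rE}$, and carries the target $\mu_\K\otimes\mu_{\cal Y}\otimes\rho_{Y\rE}$ (with $\mu_{\cal Y}$ on $K'$) to $\mu_\K\otimes\mu_{\cal Y}\otimes\rho_{Y\rE}$ (with $\mu_{\cal Y}$ now on $W$), since a uniform group element remains uniform under the shift $k'\mapsto y-k'$. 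It therefore suffices to bound $\delta(\rho_{KWY\rE},\mu_\K\otimes\mu_{\cal Y}\otimes\rho_{Y\rE})$ with $W=\Ext(K,X)$.

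Now I would apply the strong-extractor property of $\Ext$ to the source $X$, seed $K$, and side information $\rE':=(Y,\rE)$, whose extractor output is exactly $W=\Ext(K,X)$. The hypothesis to verify is $\rho_{KX\rE'}=\mu_\K\otimes\rho_{X\rE'}$, i.e.\ $\rho_{KXY\rE}=\mu_\K\otimes\rho_{XY\rE}$, and this is where the one-time pad $K'$ does its work: because $K'$ is uniform and independent of $(K,X)$, for every $k,x$ the value $Y=\Ext(k,x)+K'$ is uniform, so $P_{KXY}(k,x,y)=\tfrac{1}{|\K|}\cdot\tfrac{1}{|{\cal Y}|}P_X(x)$ factorizes with $K$ uniform and independent of $(X,Y)$; and the Markov condition $(K,K')\leftrightarrow XY\leftrightarrow\rE$ ensures that the conditional state $\rho_\rE^{x,y}$ does not depend on $K$, so the independence extends to $\rE$ as well. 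Granting this, the strong extractor yields
$$\delta(\rho_{KWY\rE},\mu_\K\otimes\mu_{\cal Y}\otimes\rho_{Y\rE})\le\frac{\nu}{2}\sqrt{\guess(X|Y\rE)\cdot|{\cal Y}|} \enspace,$$
which, combined with the relabeling above, is precisely the required key-privacy bound. Note that feeding $(Y,\rE)$ directly into the extractor produces $\guess(X|Y\rE)$, matching the statement of Definition~\ref{def:key-privacy}, so Remark~\ref{rem:msgind} is not even needed.

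I expect the main obstacle to be the verification that $\rho_{KXY\rE}=\mu_\K\otimes\rho_{XY\rE}$, i.e.\ that including the hash value $Y$ in the adversary's side information does not destroy the independence of the seed $K$ from the source and side information. This is mildly counterintuitive, since $Y$ is itself a function of $K$; it goes through only because the fresh one-time pad $K'$ perfectly decouples $Y$ from $K$ given $X$, and because the Markov assumption forbids $\rE$ from depending on the key beyond $(X,Y)$. The relabeling is then routine, trace-distance-preserving bookkeeping, and the group structure of ${\cal Y}$ is exactly what makes both it and the uniformity argument go through.
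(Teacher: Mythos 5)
Your proof is correct and rests on the same two ideas as the paper's: the one-time pad $K'$ keeps the seed $K$ uniform and independent of everything else even in the presence of the hash value $Y$, and the group bijection $K' \leftrightarrow W := Y - K'$ transfers the question about the key to a question about the extractor output $W = \Ext(K,X)$. The organization differs, though. The paper conditions on $Y=y$, observes that $\rho_{KX\rE|Y=y} = \mu_\K \otimes \rho_{X\rE|Y=y}$, applies the extractor to each conditional state with side information $\rE$ alone, and then recombines the per-$y$ bounds via Jensen's inequality and Property~\ref{prop:expansion} to arrive at $\guess(X|Y\rE)$. You instead absorb $Y$ into the side information, prove the global identity $\rho_{KXY\rE} = \mu_\K \otimes \rho_{XY\rE}$ (correctly, using exactly the two facts the paper uses per-$y$: the fresh one-time pad and the Markov condition), and invoke the extractor once. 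This buys a cleaner argument with no averaging step; the paper's conditioning buys a hypothesis check that is arguably more transparent, since after fixing $Y=y$ the key $K'$ is literally a deterministic function of $(K,X)$.

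One caveat, which applies to the paper's proof exactly as much as to yours: the final step uses the extractor guarantee in the form ``$W$ is close to uniform \emph{and decoupled} from the seed and the side information'' (the paper writes $\mu_{\cal Y} \otimes \mu_\K \otimes \rho_{\rE|Y=y}$ on the ideal side, and this is also the form in which the construction of Appendix~\ref{sec:SSConstruction} is quoted), whereas the hypothesis as literally written, $\delta(\rho_{KY\rE},\mu_\K \otimes \rho_{Y\rE}) \leq \frac{\nu}{2}\sqrt{\guess(X|\rE)\,|{\cal Y}|}$, only asserts that the \emph{seed} stays uniform given output and side information; that weaker reading is, e.g., vacuously satisfied by a constant $\Ext$, for which key-privacy of $\hash$ plainly fails. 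So your bound $\delta(\rho_{KWY\rE},\mu_\K \otimes \mu_{\cal Y} \otimes \rho_{Y\rE}) \leq \cdots$ silently upgrades the stated hypothesis\,---\,but since the paper's own proof performs the identical upgrade, this is a shared imprecision in the proposition's phrasing, not a gap in your argument relative to the paper.
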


\begin{proof} 
Uniformity is clear. For key-privacy, consider a state $\rho_{KXY\rE}$ with the properties as in Definition~\ref{def:key-privacy}. We fix an arbitrary $y \in \cal Y$ and condition on $Y = y$. Conditioning on $X = x$ as well for an arbitrary $x \in \cal X$, the key $(K,K')$ is uniformly distributed subject to $\hash(K,x)+K' = y$. In other words, $K$ is uniformly random in $\K$, and $K' = y - \hash(K,x)$. Therefore, making use of the Markov-chain property, conditioning on $Y = y$ only, $K$ is uniformly random in $\K$ and independent of $X$ and $\rE$, and $K' = y - \hash(K,X)$. 
Thus, by the extractor property, $\delta(\rho_{K'K\rE|Y=y},\mu_{\cal Y} \otimes \mu_\K \otimes \rho_{\rE|Y=y}) \leq \frac{\nu}{2} \sqrt{\guess(X|\rE,Y\!=\!y)\cdot|{\cal Y}|}$. The claim follows by averaging over $y$, and applying Jensen's inequality and Property~\ref{prop:expansion}. 
\qed
\end{proof}
The following technical result will be useful. 

\begin{lemma}\label{lemma:implchainrule}
Let $\hash: \K \times {\cal X} \to {\cal Y}$ be a keyed hash function that satisfies message-independence. Furthermore, let $\rho_{KXY\rE}$ be a state with the properties as in Definition~\ref{def:key-privacy}. Then
$$
\guess(X|KY\rE) \leq \guess(X|Y\rE) \cdot |{\cal Y}| \enspace.
$$
\end{lemma}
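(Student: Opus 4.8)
The plan is to exploit that $K$ and $Y$ are classical, so by Property~\ref{prop:expansion} one can write $\guess(X|KY\rE)=\frac{1}{|\K|}\sum_{k,y}\guess\bigl(\rho^{(k,y)}\bigr)$, where $\rho^{(k,y)}=\sum_{x\in S_{k,y}}P_X(x)\proj{x}\otimes\rho_\rE^{x,y}$ is the (sub-normalised) state supported on the fibre $S_{k,y}:=\{x:\hash(k,x)=y\}$; here I used $P_K=\mu_\K$ together with $K\leftrightarrow XY\leftrightarrow\rE$ (so that $\rho_\rE$ depends on $(k,x)$ only through $(x,y)$). Equivalently, writing the optimal per-fibre measurements as $\{G^k_x\}_x$, one has $\guess(X|KY\rE)=\frac{1}{|\K|}\sum_{k,x}P_X(x)\tr\bigl(G^k_x\,\rho_\rE^{x,\hash(k,x)}\bigr)$.

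The factor $|{\cal Y}|$ is intended to come from a \emph{partition} argument: for every fixed $k$ the fibres $\{S_{k,y}\}_y$ partition ${\cal X}$ into at most $|{\cal Y}|$ blocks, hence the measurement operators satisfy $\sum_x G^k_x\le|{\cal Y}|\,\I$. Message-independence then supplies the complementary counting identity $\#\{k:\hash(k,x)=y\}=|\K|\,P_Y(y)$, which is equivalent to $\frac{1}{|\K|}\sum_k\rho_\rE^{x,\hash(k,x)}=\sum_y P_Y(y)\rho_\rE^{x,y}=\rho_\rE^x$; and by Remark~\ref{rem:msgind}, $\guess(X|Y\rE)=\guess(X|\rE)$, so it suffices to bound the above by $|{\cal Y}|\guess(X|\rE)$.

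These two ingredients combine cleanly in the idealised situation where the guessing operators are the \emph{same} for all $k$, say $G^k_x=G_x$ with $\sum_x G_x\le|{\cal Y}|\,\I$: averaging over $k$ reconstructs the marginal, $\frac{1}{|\K|}\sum_{k,x}P_X(x)\tr(G_x\rho_\rE^{x,\hash(k,x)})=\sum_x P_X(x)\tr(G_x\rho_\rE^x)$, and then Property~\ref{prop:bound}, in the form $P_X(x)\rho_\rE^x\le\guess(X|\rE)\,\sigma_\rE$, gives $\sum_x P_X(x)\tr(G_x\rho_\rE^x)\le\guess(X|\rE)\,\tr\!\bigl((\textstyle\sum_x G_x)\sigma_\rE\bigr)\le|{\cal Y}|\guess(X|\rE)$, exactly the claim.

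The hard part is that the optimal guessing strategy is genuinely $k$-dependent, and it is the \emph{individual} fibre states $\rho_\rE^{x,y}$ --- not their $Y$-average $\rho_\rE^x$ --- that appear, so one cannot simply reconstruct the marginal as above. I expect this $k$-dependence together with the single-$y$ versus $Y$-averaged tension to be the main obstacle: the naive repairs overshoot, since bounding each fibre by its conditional guessing probability yields only $\guess(X|KY\rE)\le\sum_y\guess(X|\rE,Y{=}y)$, while the chain rule $\guess(X|KY\rE)\le|{\cal Y}|\guess(XY|K\rE)=|{\cal Y}|\guess(X|K\rE)$ yields only $|{\cal Y}|\guess(X|K\rE)$, and both are strictly larger than $|{\cal Y}|\guess(X|Y\rE)$ in general (e.g.\ when $\rE$ is correlated with $Y$, so that $\guess(X|K\rE)>\guess(X|\rE)$). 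The cleanest way to control the $k$-dependence is presumably to pass to the dual (min-entropy/certificate) formulation underlying Property~\ref{prop:bound}: one exhibits operators $Z^{k,y}\ge\frac{1}{|\K|}P_X(x)\rho_\rE^{x,y}$ for all $x\in S_{k,y}$ with $\sum_{k,y}\tr Z^{k,y}\le|{\cal Y}|\guess(X|\rE)$. Turning the fibre-partition budget $\sum_x G^k_x\le|{\cal Y}|\,\I$ and the reconstruction $\frac{1}{|\K|}\sum_k\rho_\rE^{x,\hash(k,x)}=\rho_\rE^x$ into a single $\rE$-certificate built from $\guess(X|\rE)\,\sigma_\rE$ is the crux that I would need to carry out carefully.
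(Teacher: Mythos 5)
Your write-up is a proof plan, not a proof: everything up to the point you call ``the crux'' is routine (the fibre decomposition, the budget $\sum_x G^k_x\le|{\cal Y}|\,\I$, the reconstruction identity $\frac{1}{|\K|}\sum_k\rho_\rE^{x,\hash(k,x)}=\rho_\rE^x$, and the idealised case of $k$-independent measurements), while the one step that carries the entire content of the lemma\,---\,controlling genuinely $k$-dependent strategies acting on the individual fibre states $\rho_\rE^{x,y}$\,---\,is left open. Worse, this gap cannot be closed along the lines you sketch, because under mere message-independence the inequality you are trying to certify is actually \emph{false}. Concretely: partition ${\cal X}=\set{0,\dots,M-1}$ into $1/\eps$ blocks of size $\eps M$, let $\K=\set{0,\dots,1/\eps-1}$ with $K$ uniform and $\hash(k,x)=1$ if $x$ lies in block $k$, else $0$ (message-independent, with $P_Y(1)=\eps$); take $X$ uniform, and let $\rE={\cal Q}(X,Y)$ reveal the position of $X$ inside its block when $Y=1$ and nothing when $Y=0$. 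All hypotheses of Definition~\ref{def:key-privacy} hold, yet $\guess(X|KY\rE)=\eps+1/M$ (if $Y=1$, the key names the block and $\rE$ the position within it), whereas $\guess(X|Y\rE)=\eps^2+(1-\eps)/M$, so for small $\eps$ and large $M$ the bound $\guess(X|KY\rE)\le|{\cal Y}|\cdot\guess(X|Y\rE)$ fails by a factor of roughly $1/(2\eps)$. Your ``overshoot'' worry is therefore not pessimism about a clumsy estimate; it is a genuine obstruction, and the missing hypothesis is \emph{uniformity} (flatness of $P_Y$, Definition~\ref{def:uniformity}), under which the example above collapses to an equality. This costs nothing in the applications, since all hash functions the paper instantiates are uniform, being of the one-time-padded form of Proposition~\ref{prop:hashconstruction}.

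The paper's proof is of a completely different, measurement-free kind, and it is worth internalizing. Using the Markov property, write $\rE={\cal Q}(XY)$; introduce a \emph{fresh} uniform key $K'$, independent of everything, and set $Y'=\hash(K',X)$ and $\rE'={\cal Q}(XY')$. Then: (i) $\rho_{XY'\rE'}=\rho_{XY\rE}$ and $K$ is independent of $XY'\rE'$, hence $\guess(X|Y\rE)=\guess(X|KY'\rE')$; (ii) by Property~\ref{prop:cond} applied to the event $\Lambda=\set{Y=Y'}$, $\guess(X|KY'\rE')\ge P[\Lambda]\,\guess(X|KY'\rE',\Lambda)$, and conditioned on $\Lambda$ the registers $Y'\rE'$ may be exchanged for $Y\rE$; (iii) $Y'$ is independent of $KXY\rE$ and distributed as $Y$, so $P[\Lambda]\ge 1/|{\cal Y}|$ and $\guess(X|KY\rE,\Lambda)=\guess(X|KY\rE)$, and chaining (i)--(iii) gives the lemma. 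Note that the equality in (iii) is exactly the place where flatness of $P_Y$ enters: conditioning on $\set{Y=Y'}$ reweights the state of $KXY\rE$ by $P_{Y'}(Y)=P_Y(Y)$, which is harmless precisely when $P_Y$ is constant\,---\,consistent with the counterexample above, where this step (and only this step) breaks. So the honest conclusions are: the certificate program provably cannot succeed at the stated level of generality; the resampling trick is the way to prove the statement; and whichever route one takes, the lemma should be invoked for uniform (flat-output) hash functions, as it effectively is in the paper's applications.
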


\begin{proof}
Note that the Markov-chain property $K\leftrightarrow XY\leftrightarrow \rE$ can be understood in that $\rE$ is obtained by acting on $X Y$ only: $\rE = {\cal Q}(X Y)$. For the purpose of the argument, we extend the state $\rho_{X K Y \rE}$ to a state $\rho_{X K K' Y Y' \rE \rE'}$ as follows. We choose a uniformly random and independent $K'$ in $\K$, and set $Y' = \hash(K',X)$ and $\rE' = {\cal Q}(X Y')$. Note that $\rho_{X K Y \rE}$ coincides with $\rho_{X K' Y' \rE'}$. Therefore, 
 $$
 \guess(X| Y \rE) =  \guess(X|Y' \rE') = \guess(X|K Y' \rE') \enspace,
 $$
 where the second equality is by the independence of $K$. 
 Furthermore, by Property~\ref{prop:cond}, we have that
 \begin{align*}
\guess(X|K Y' \rE') &\geq P[Y \!=\! Y'] \, \guess(X|K Y' \rE',Y \!=\! Y') \\
&= P[Y \!=\! Y'] \, \guess(X|K Y \rE,Y\!=\! Y') \enspace.
 \end{align*}
 Finally, by the message-independence of $\hash$, it holds that $Y'$ is independent of $K X Y \rE$ (and with the same distribution as $Y$), and therefore $P[Y \!=\! Y'] \geq 1/|{\cal Y}|$ and $\guess(X|K Y \rE,Y \!=\! Y') = \guess(X|K Y \rE)$. Altogether, this gives us the bound $\guess(X|K Y \rE) \leq  \guess(X| Y \rE) \cdot |{\cal Y}|$, which concludes the proof. 
\qed
\end{proof}
Equipped with Lemma~\ref{lemma:implchainrule}, we can now show the following composition results.

\begin{proposition}[Parallel Composition]\label{prop:parcomposition}
Consider two keyed hash functions $\hash_1: \K_1 \times {\cal X} \to {\cal Y}_1$ and $\hash_2: \K_2 \times {\cal X} \to {\cal Y}_2$ with the same message space $\cal X$, and 
 $$
 \hash: (\K_1 \times \K_2) \times {\cal X} \to {\cal Y}_1 \times {\cal Y}_2, \, (k_1\|k_2,x) \mapsto \bigl(\hash_1(k_1,x),\hash_2(k_2,x)\bigr) \, 
 $$
 with key space $\K = \K_1 \times \K_2$ and range ${\cal Y} = {\cal Y}_1 \times {\cal Y}_2$. 
 If $\hash_1$ and $\hash_2$ are both message-independent (or uniform) and respectively offer $\nu_1$-and $\nu_2$-key privacy, then $\hash$ is message-independent (or uniform) and offers $(\nu_1+\nu_2)$-key privacy. 
\end{proposition}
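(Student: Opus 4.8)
The plan is to handle message-independence (and uniformity) of $\hash$ directly, and then to establish $(\nu_1+\nu_2)$-key-privacy by a two-step hybrid argument that peels off one component key at a time, folding the output of the other component into the quantum side-information. For message-independence, note that for a uniform $K=(K_1,K_2)$ the subkeys $K_1,K_2$ are independent and uniform, hence $Y_1=\hash_1(K_1,x)$ and $Y_2=\hash_2(K_2,x)$ are independent; by message-independence of each $\hash_i$ the marginal law of $Y_i$ does not depend on $x$, so neither does the product law of $Y=(Y_1,Y_2)$. Uniformity follows the same way.

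For key-privacy, I would fix a state $\rho_{KXY\rE}$ as in Definition~\ref{def:key-privacy}, write $K=(K_1,K_2)$ and $Y=(Y_1,Y_2)$, and interpolate through $\mu_{\K_1}\otimes\rho_{K_2Y\rE}$. Since the trace distance is unchanged by the common tensor factor $\mu_{\K_1}$, the triangle inequality gives
$$
\delta(K,\UNIF_\K|Y\rE) \leq \delta(K_1,\UNIF_{\K_1}|K_2Y\rE) + \delta(K_2,\UNIF_{\K_2}|Y\rE) \enspace.
$$
I would then bound the second summand via the $\nu_2$-key-privacy of $\hash_2$, viewing $(Y_1,\rE)$ as its side-information, and the first summand via the $\nu_1$-key-privacy of $\hash_1$, viewing $(K_2,Y_2,\rE)$ as its side-information.

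The main work\,---\,and the step I expect to be the real obstacle\,---\,is checking that these two reduced states genuinely satisfy the hypotheses of Definition~\ref{def:key-privacy}, in particular the Markov-chain conditions $K_2\leftrightarrow XY_2\leftrightarrow Y_1\rE$ and $K_1\leftrightarrow XY_1\leftrightarrow K_2Y_2\rE$. Here I would use that in the original state everything but $\rE$ is classical, that $K_1,K_2$ are independent and independent of $X$, and that $\rE$ is produced from $(X,Y_1,Y_2)$ alone (the original Markov chain $K\leftrightarrow XY\leftrightarrow\rE$). Conditioning on $(X,Y_2)$ constrains only $K_2$, so $(Y_1,\rE)$ becomes a channel applied to $K_1$ alone, which is independent of $K_2$; this yields the first Markov chain, and the second is symmetric with the roles of the indices and of $(K_2,Y_2)$ interchanged. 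The marginal conditions $\rho_{K_iX}=\mu_{\K_i}\otimes\rho_X$ and $Y_i=\hash_i(K_i,X)$ are immediate from tracing out and from the definition of $\hash$.

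Finally I would assemble the bound. The $\nu_2$-key-privacy of $\hash_2$ gives $\delta(K_2,\UNIF_{\K_2}|Y\rE)\leq\frac{\nu_2}{2}\sqrt{\guess(X|Y\rE)\cdot|{\cal Y}_2|}$, which I pad by $|{\cal Y}_1|\geq 1$ to reach $\sqrt{\guess(X|Y\rE)\cdot|{\cal Y}|}$. The $\nu_1$-key-privacy of $\hash_1$ gives $\delta(K_1,\UNIF_{\K_1}|K_2Y\rE)\leq\frac{\nu_1}{2}\sqrt{\guess(X|K_2Y\rE)\cdot|{\cal Y}_1|}$, where the conditioning now carries the extra system $K_2$. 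To absorb it, I would invoke Lemma~\ref{lemma:implchainrule} for $\hash_2$ (with side-information $(Y_1,\rE)$, using the Markov chain just verified together with message-independence of $\hash_2$) to obtain $\guess(X|K_2Y\rE)\leq\guess(X|Y\rE)\cdot|{\cal Y}_2|$, which is precisely the factor turning the first summand into $\frac{\nu_1}{2}\sqrt{\guess(X|Y\rE)\cdot|{\cal Y}|}$. Adding the two summands and using $|{\cal Y}|=|{\cal Y}_1|\cdot|{\cal Y}_2|$ then yields the claimed bound $\frac{\nu_1+\nu_2}{2}\sqrt{\guess(X|Y\rE)\cdot|{\cal Y}|}$.
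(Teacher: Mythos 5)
Your proof is correct and is essentially the paper's own argument: the same hybrid decomposition of the trace distance, the same Markov-chain verifications derived from $K_1K_2\leftrightarrow XY_1Y_2\leftrightarrow\rE$ and the independence of the keys, and the same use of Lemma~\ref{lemma:implchainrule} to absorb the conditioned subkey at the cost of a factor $|{\cal Y}_i|$ inside the square root. The only difference is cosmetic: you condition $K_2$ into the first summand and invoke the lemma for $\hash_2$, whereas the paper conditions $K_1$ into the second summand and invokes the lemma for $\hash_1$, i.e., the roles of the indices are swapped.
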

 
\begin{proof}
 That  message-independence/uniformity is preserved is clear. To argue key-privacy, assume that we have $\rho_{K_1 K_2 X} = \rho_{K_1} \otimes \rho_{K_2} \otimes \rho_X$, $Y_1 = \hash_1(K_1,X)$ and $Y_2 = \hash(K_2,X)$, and $K_1 K_2\leftrightarrow XY_1 Y_2\leftrightarrow \rE$. We need to bound the distance of $K_1 K_2$ from uniform when given $Y_1 Y_1\rE$, which we can decompose into 
$$
\delta(K_1 K_2,\UNIF_{\K_1} \UNIF_{\K_2} | Y_1 Y_1\rE) \leq \delta(K_1,\UNIF_{\K_1}|Y_1 Y_2 \rE) + \delta(K_2,\UNIF_{\K_2}|K_1 Y_1 Y_2 \rE) \enspace.
$$
The above conditions on $\rho_{K_1 K_2 X Y_1 Y_2 \rE}$ imply that $K_1 \leftrightarrow XY_1 \leftrightarrow K_2 Y_2 \rE$ holds, and thus also $K_1 \leftrightarrow XY_1 \leftrightarrow Y_2 \rE$. Indeed, $K_1 K_2\leftrightarrow XY_1 Y_2\leftrightarrow \rE$ implies that also $K_1 \leftrightarrow XY_1 K_2 Y_2\leftrightarrow \rE$, which together with $K_1 \leftrightarrow XY_1 \leftrightarrow K_2 Y_2$ (which holds by choice of $K_2$ and $Y_2$) implies that $K_1 \leftrightarrow XY_1 \leftrightarrow K_2 Y_2 \rE$. 
Therefore, by the key-privacy property of $\hash_1$, setting $\rE_1 = Y_2 \rE$, we see that 
$$
\delta(K_1,\UNIF_{\K_1}|Y_1 Y_2 \rE) \leq \frac{\nu_1}{2} \sqrt{\guess(X|Y_1 Y_2 \rE)\cdot|{\cal Y}_1|} \enspace.
$$
Similarly, $K_2 \leftrightarrow XY_2 \leftrightarrow K_1 Y_1 \rE$, and so by the key-privacy property of $\hash_2$, setting $\rE_2 = K_1 Y_1 \rE$, we conclude that 
\begin{align*}
\delta(K_2,\UNIF_{\K_2}|K_1 Y_1 Y_2 \rE) &\leq \frac{\nu_2}{2} \sqrt{\guess(X|Y_2 K_1 Y_1 \rE)\cdot|{\cal Y}_2|} \\
&\leq \frac{\nu_2}{2} \sqrt{\guess(X|Y_2 Y_1 \rE)\cdot|{\cal Y}_1|\cdot|{\cal Y}_2|}  \enspace,
\end{align*}
which proves the claim. 
\qed
\end{proof}

 \begin{proposition}[``Sequarallel'' Composition]\label{prop:seqcomposition}
Consider two keyed hash functions $\hash_1: \K_1 \times {\cal X} \to {\cal Y}_1$ and $\hash_2: \K_2 \times ({\cal X}\otimes{\cal Y}_1) \to {\cal Y}_2$ with message spaces as specified, and 
 \begin{align*}
 \hash: (\K_1 \times \K_2) \times {\cal X} \to {\cal Y}_1 \times {\cal Y}_2 \, , \;  (k_1\|k_2,x) \mapsto \bigl(\hash_1(k_1,x),\hash_2(k_2,x\|\hash_1(k_1,x))\bigr) 
 \end{align*}
 with key space $\K = \K_1 \times \K_2$ and range ${\cal Y} = {\cal Y}_1 \times {\cal Y}_2$. 
 If $\hash_1$ and $\hash_2$ are both message-independent (or uniform) and respectively offer $\nu_1$-and $\nu_2$-key privacy, then $\hash$ is message-independent (or uniform) and offers $(\nu_1+\nu_2)$-key privacy. 
 \end{proposition}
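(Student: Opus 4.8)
The plan is to mirror the proof of Proposition~\ref{prop:parcomposition}, the only genuinely new point being that $\hash_2$ is now fed the data-dependent message $X\|\hash_1(K_1,X)$ rather than $X$ itself.

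First I would handle message-independence (and, \emph{mutatis mutandis}, uniformity). Fix $x$ and take $K_1,K_2$ uniform and independent. Message-independence of $\hash_1$ makes the law of $Y_1=\hash_1(K_1,x)$ independent of $x$; then, conditioned on any value $Y_1=y_1$, the string $x\|y_1$ is fixed and, since $K_2$ is uniform and independent of $K_1$, message-independence of $\hash_2$ makes the conditional law of $Y_2=\hash_2(K_2,x\|y_1)$ independent of $x\|y_1$. Hence the joint law of $(Y_1,Y_2)$ is independent of $x$ (and in the uniform case both coordinates are uniform and independent, so $(Y_1,Y_2)$ is uniform on ${\cal Y}_1\times{\cal Y}_2$).

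For key-privacy, I would start from a state $\rho_{K_1 K_2 X Y_1 Y_2 \rE}$ as in Definition~\ref{def:key-privacy} and use the same triangle-inequality split as in the parallel case,
$$\delta(K_1 K_2,\UNIF_{\K_1}\UNIF_{\K_2}|Y_1 Y_2\rE)\le \delta(K_1,\UNIF_{\K_1}|Y_1 Y_2\rE)+\delta(K_2,\UNIF_{\K_2}|K_1 Y_1 Y_2\rE).$$
For the first term, note that conditioning on $(X,Y_1)$ leaves $K_2$ uniform and independent of $K_1$, so $(K_2,Y_2)$ is a function of $K_2$ alone and $K_1\leftrightarrow XY_1\leftrightarrow K_2 Y_2$ holds; gluing this with the hypothesis $K_1 K_2\leftrightarrow XY_1 Y_2\leftrightarrow\rE$ gives $K_1\leftrightarrow XY_1\leftrightarrow Y_2\rE$, and key-privacy of $\hash_1$ with side information $\rE_1=Y_2\rE$ bounds this term by $\tfrac{\nu_1}{2}\sqrt{\guess(X|Y_1 Y_2\rE)\cdot|{\cal Y}_1|}$. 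For the second term, I would apply key-privacy of $\hash_2$ treating $X'=X\|Y_1$ as its message and $\rE_2=K_1 Y_1\rE$ as its side information: the independence $\rho_{K_2 X'}=\mu_{\K_2}\otimes\rho_{X'}$ holds because $K_2$ is independent of $(K_1,X)$, and the needed chain $K_2\leftrightarrow X' Y_2\leftrightarrow\rE_2$ follows by the same gluing argument (with a copy of $Y_1$, a function of the middle system $X'Y_2$, harmlessly appended to the right). Since $Y_1$ sits in the conditioning, $\guess(X\|Y_1|Y_2 K_1 Y_1\rE)=\guess(X|Y_2 K_1 Y_1\rE)$, and Lemma~\ref{lemma:implchainrule} applied to $\hash_1$ (with its side information taken to be $Y_2\rE$) yields $\guess(X|K_1 Y_1 Y_2\rE)\le\guess(X|Y_1 Y_2\rE)\cdot|{\cal Y}_1|$. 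Combining, the second term is at most $\tfrac{\nu_2}{2}\sqrt{\guess(X|Y_1 Y_2\rE)\cdot|{\cal Y}_1|\cdot|{\cal Y}_2|}$, and using $|{\cal Y}_2|\ge 1$ together with $|{\cal Y}|=|{\cal Y}_1|\cdot|{\cal Y}_2|$ the two terms add up to $\tfrac{\nu_1+\nu_2}{2}\sqrt{\guess(X|Y_1 Y_2\rE)\cdot|{\cal Y}|}$, as required.

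The main obstacle is the bookkeeping of the Markov chains in the regime where $Y_1$ simultaneously serves as part of $\hash_2$'s message and as side information the adversary may hold: the key observation that unblocks everything is that conditioning on $(X,Y_1)$ renders $K_1$ and $K_2$ independent (so that $(K_2,Y_2)$ depends only on $K_2$), after which the gluing steps and the final guessing-probability estimate are exactly those already established for the parallel composition.
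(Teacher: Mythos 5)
Your proposal is correct and takes essentially the same approach as the paper: the paper's proof simply re-runs the parallel-composition argument (triangle-inequality split, Markov-chain gluing, key-privacy of $\hash_1$ with side information $Y_2\rE$, then key-privacy of $\hash_2$ with side information $K_1 Y_1 \rE$ together with Lemma~\ref{lemma:implchainrule}), noting only that $\hash_2$'s message is now $X\|Y_1$, so the resulting bound involves $\guess(X Y_1|Y_2 Y_1 \rE)$, which coincides with $\guess(X|Y_2 Y_1 \rE)$. Your write-up merely spells out the Markov-chain bookkeeping that the paper leaves implicit in its reference back to Proposition~\ref{prop:parcomposition}.
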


 \begin{proof}
The proof goes along the same lines as the proof of Proposition~\ref{prop:parcomposition}, except that in the reasoning for the bound on $\delta(K_2,\UNIF_{\K_2}| K_1 Y_1 \rE)$, we append $Y_1$ to $X$, with the consequence that we get a bound that is in terms of $\guess(X Y_1|Y_2 Y_1 \rE)$, but this obviously coincides with $\guess(X|Y_2 Y_1 \rE)$, and thus we end up with the same bound. 
 \qed
 \end{proof}

\section{Message Authentication with Key-Recycling}

We first consider the problem of {\em message authentication} with key-recycling. It turns out that\,---\,at least with our approach\,---\,this is the actual challenging problem, and extending to (authenticated) encryption is then quite easy. 

\subsection{The Semantics}

We quickly specify the semantics of a quantum authentication code (or scheme) with key-recycling.%
\footnote{Our definition is tailored to our goal that the key can be re-used unchanged in case the message is accepted by the recipient, Bob, and only needs to be refreshed in case he rejects. In the literature, key-recycling sometimes comes with {\em two} refresh procedures, one for the case Bob rejects and one for the case he accepts.  
}

\begin{definition}
A {\em quantum authentication code (with key recycling)} $\QMAC$ with message space $\MSG$ and key space $\KEY$ is made up of the following components:
 (1) A CPTP map $\Auth$ that is controlled by a {\em message} $\msg \in \MSG$ and a {\em key} $\key \in \KEY$, and that acts on an empty system and outputs a {\em quantum authentication tag} (with a fixed state space), (2) a measurement $\Verify$ that is controlled by $\msg \in \MSG$ and $\key \in \KEY$, and that acts on a quantum authentication tag and outputs a decision bit $d \in \set{0,1}$, and (3) a randomized function $\Refresh: \KEY \to \KEY$. 
\end{definition}

We will often identify an authentication code, formalized as above, with the obvious {\em authenticated-message-transmission protocol} $\pi_{\QMAC}(\msg)$, where Alice and Bob start with a shared key $\key \in \KEY$, and  Alice sends the message $\msg$ along with its quantum authentication tag prepared by means of $\Auth$ to Bob over a channel that is controlled by the adversary Eve, and, upon reception of the (possibly modified) message and tag, Bob verifies correctness using $\Verify$ and accordingly accepts or rejects. If he rejects, then Alice and Bob replace $\key$ by $\key' := \Refresh(\key)$.%
\footnote{Obviously, this requires Alice and Bob to exchange fresh randomness, i.e., the randomness for executing $\Refresh$, in a {\em reliable} and {\em private} way; how this is done is not relevant here. }
Note that, for any message $\msg \in \MSG$ and any strategy for Eve on how to interfere with the communication, the protocol $\pi_{\QMAC}(\msg)$ induces a CPTP map $\Exe[\pi_{\QMAC}(\msg)]: \dens(\KEY \otimes \H_\rE) \to \dens(\KEY \otimes \H_{\rE'})$ that describes the evolution of the shared key $\key$ and Eve's local system as a result of the execution of $\pi_{\QMAC}(\msg)$. 

Our goal will be to show that, for our construction given below, and for any behavior of Eve, the CPTP map $\Exe[\pi_{\QMAC}(\msg)]$ maps a key about which Eve has little information into a (possibly updated) key about which Eve still has little information\,---\,what it means here to ``have little information" needs to be specified, but it will in particular imply that it still allows Bob to detect a modification of the message. This then ensures re-usability of the quantum authentication code\,---\,with the same key as long as Bob accepts the incoming messages, and with the updated key in case he rejects.

\subsection{The Scheme}\label{sec:scheme}

Let $\MSG$ be an arbitrary non-empty finite set. We are going to construct a quantum message authentication code $\QMAC$ with message space $\MSG$. 
To this end, let $\MAC: \K \times (\MSG\times\set{0,1}^n) \to \TAG$ be a classical one-time message authentication code  with a message space $\MSG\times\set{0,1}^n$ for some $n \in \N$. We require $\MAC$ to be secure in the standard sense, meaning a modified message will be detected except with small probability $\eps_\MAC$. Additionally, we require $\MAC$ to satisfy message-independence and ideal key-privacy, as discussed in Sect.~\ref{sec:key-privacy}. Actually, it is sufficient if $\MAC(\,\cdot\,,\msg\|\,\cdot\,)$, i.e., the hash function $\K \times\set{0,1}^n \to \TAG$, $(k,x) \mapsto \MAC(k,\msg\|x)$ obtained by fixing $\msg$, satisfies message-independence and ideal key-privacy for any $\msg \in \MSG$. 
Assuming that $\MSG$ consists of bit strings of fixed size so that $\MSG\times\set{0,1}^n = \set{0,1}^N$ for some $N \in \N$, the canonical message authentication codes $\MAC: \bigl(\F_2^{\ell \times N} \times \F_2^\ell\bigr) \times \F_2^N \to \F_2^\ell$, $(A\|b,x) \mapsto Ax+b$ and $\MAC: \bigl(\F_{2^N} \times \F_2^\ell\bigr) \times \F_{2^N} \to \F_2^\ell$, $(a\|b,x) \mapsto trunc(a \cdot x)+b$, where $trunc: \F_{2^N} \to \F_2^\ell$ is an arbitrary surjective $\F_2$-linear map, are suitable choices; this follows directly from Proposition~\ref{prop:hashconstruction}. 
Finally, let $\code \subset \set{0,1}^n$ be a code with large minimal distance~$d$. 

Then, our quantum message authentication code $\QMAC$ has a key space $\KEY = \K \times \code$, where for a key $k \| \theta \in \K \times \code$ we refer to $k$ as the ``MAC key" and to $\theta$ as the ``basis key", and $\QMAC$ works as described in Figure~\ref{fig:qmac}

\begin{figure}[H]
\vspace{-2ex}
\begin{mybox}
\begin{description}\setlength{\parskip}{1ex}
\item{$\QMAC.\Auth(k\|\theta, \msg)$: } Choose a uniformly random $x \in \set{0,1}^n$ and output $n$ qubits $\rB_\circ$ in state $H^\theta \ket{x}$ and the classical tag $t = \MAC(k,\msg\|x)$. 
\item{$\QMAC.\Verify(k\|\theta, \msg,t)$: } Measure the qubits $\rB_\circ$ in bases $\theta$ to obtain $x'$ (supposed to~be~$x$), check that $t = \MAC(k,\msg\|x')$, and output $0$ or $1$ accordingly. 
\item{$\QMAC.\Refresh(k\|\theta)$: } Choose a uniformly random $\theta' \in \code$ and output $k\|\theta'$. 
\end{description}
\end{mybox}\vspace{-2.5ex}
\caption{The quantum message authentication code $\MAC$. }\label{fig:qmac}
\end{figure}

It is clear that as long as the MAC key $k$ is ``secure enough", the classical $\MAC$ takes care of an Eve that tries to modify the message $\msg$, and it ensures that such an attack is detected by Bob, except with small probability.  What is non-trivial to argue is that the MAC key (together with the basis key) indeed stays ``secure enough" over multiple executions of $\pi_\QMAC(\msg)$; this is what we show below.


\subsection{Analysis}

We consider an execution of the authenticated-message-transmission protocol $\pi_\QMAC(\msg)$ for a fixed message $\msg$.  
Let $\rho_{K \Theta \rE} \in \dens(\K \otimes \code \otimes \H_\rE)$ be the joint state {\em before} the execution, consisting of the MAC key $K$, the basis key $\Theta$, and Eve's local quantum system $\rE$. 
The joint state $\Exe[\pi_{\QMAC}(\msg)](\rho_{K\Theta \rE})$ {\em after} the execution is given by $\rho_{K \Theta' T D \rC} \in \dens(\K \otimes \code \otimes \TAG \otimes \set{0,1} \otimes \H_\rC)$, where $\Theta'$ is the (possibly) updated basis key, $T$ is the classical tag, $D$ is Bob's decision to accept or reject, and $\rC$ is Eve's new quantum system.
Eve's complete information after the execution of the scheme is thus given by $\rE' = T D \rC$. 

Recall that $T D \rC$ is obtained as follows from $K \Theta \rE$. Alice prepares BB84-qubits $\rB_\circ$ for uniformly random bits $X$ and with bases determined by $\Theta$, and she computes the tag $T := \MAC(K,\msg\|X)$. Then, Eve acts on $\rB_\circ \rE$ (in a way that may depend on $T$) and keeps one part, $\rC$, of the resulting state, and Bob measures the other part, $\rB$, to obtain $X'$ and checks with the (possibly modified) tag $T$ to decide on $D$. 

Note that by a standard reasoning, we can think of the BB84 qubits $\rB_\circ$ not as being prepared by first choosing the classical bits $X$ and then ``encoding" them into qubits with the prescribed bases $\Theta$, but by first preparing $n$ EPR pairs $\Phi^+_{\rA\rB_\circ}$ and then measuring the qubits in $\rA$ in the prescribed bases to obtain~$X$, i.e., $\rho_{K \Theta X \rB_\circ \rE} = {\cal M}^\BB_{\Theta \rA \to X} \bigl(\Phi^+_{\rA\rB_\circ} \otimes \rho_{K \Theta \rE}\bigr)$. 

\medskip

The following captures the main security property of the scheme. 

\begin{theorem}\label{thm:security}
If the state before the execution of $\pi_\QMAC(\msg)$ is of the form $\rho_{K\Theta \rE} = \mu_\K \otimes \rho_{\Theta \rE}$, then for any Eve the state $\rho_{K \Theta' \rE'} = \Exe[\pi_{\QMAC}(\msg)](\rho_{K\Theta \rE})$ after the execution satisfies 
$$
\guess(\Theta'|\rE') \leq  \guess(\Theta|\rE) + \frac{1}{|\code|}
$$
and
$$
\delta(K, \UNIF_\K |\Theta' \rE') \leq 2\eps_\MAC + \frac{\sqrt2}{2}\sqrt{\guess(\Theta|\rE) \biggl( 1+ \frac{|\code|}{2^{d/2}}\biggr) \, |\TAG|}  \enspace.
$$
\end{theorem}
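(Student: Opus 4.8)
The plan is to prove the two bounds separately, working throughout in the EPR reformulation $\rho_{K\Theta X\rB_\circ\rE} = {\cal M}^\BB_{\Theta\rA\to X}(\Phi^+_{\rA\rB_\circ}\otimes\rho_{K\Theta\rE})$ set up above, and writing $T = \MAC(K,\msg\|X)$ for the honest tag, $\tilde T$ for the tag Eve forwards, $X'$ for Bob's measurement outcome, $D$ for his decision, and $\rC$ for Eve's final register, so that $\rE' = TD\rC$.

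For the first bound I will split on Bob's decision with Property~\ref{prop:expansion}: since $\Theta' = \Theta$ when $D=1$ and $\Theta'$ is a fresh uniform codeword when $D=0$, this gives $\guess(\Theta'|\rE') = P[D{=}1]\,\guess(\Theta|T\rC,D{=}1) + P[D{=}0]/|\code|$, whose reject term is at most $1/|\code|$. For the accept term Property~\ref{prop:cond} yields $P[D{=}1]\guess(\Theta|T\rC,D{=}1)\le\guess(\Theta|T\rC)$, and the crux is $\guess(\Theta|T\rC)\le\guess(\Theta|\rE)$. This is exactly where message-independence does the work: because $\MAC(K,\msg\|\cdot)$ is message-independent and $K$ is uniform and independent of $(\Theta,\rE)$, a direct computation gives $\rho_{\Theta T\rB_\circ\rE} = \rho_{\Theta\rE}\otimes P_T\otimes\mu_{\{0,1\}^n}$, i.e. the averaged BB84 register together with $T$ is a fixed state independent of $(\Theta,\rE)$; as $T\rC$ is a CPTP image of $T\rB_\circ\rE$, Property~\ref{prop:monotonicity} then gives $\guess(\Theta|T\rC)\le\guess(\Theta|T\rB_\circ\rE)=\guess(\Theta|\rE)$. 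Adding the two terms yields the claim.

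For the second bound I want to invoke the ideal key-privacy of $\MAC(\cdot,\msg\|\cdot)$ (Definition~\ref{def:key-privacy}) together with Remark~\ref{rem:msgind}, which bounds $\delta(K,\UNIF_\K|\cdots)$ by $\tfrac12\sqrt{\guess(X|\cdots)|\TAG|}$. The obstacle—and the heart of the proof—is that the conditioning set may depend on $K$ only through $T$, whereas Bob's decision $D = [\,\tilde T = \MAC(K,\msg\|X')\,]$ depends on $K$ directly and destroys the required Markov structure $K\leftrightarrow XT\leftrightarrow(\cdots)$. I will get around this by replacing $D$ (and the $\Theta'$ computed from it) by the surrogate $\hat D := [\,X'{=}X \wedge \tilde T{=}T\,]$ and the correspondingly refreshed $\hat\Theta'$, both of which depend on $K$ only through $T$. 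One checks that $\hat D\le D$ always, and that $\{\hat D\ne D\}$ forces $X'\ne X$ together with acceptance, which is precisely a one-time MAC forgery and so has probability at most $\eps_\MAC$; hence the two processes produce states at trace distance at most $\eps_\MAC$, and a triangle inequality against the product states costs $2\eps_\MAC$, leaving $\delta(K,\UNIF_\K|\hat\Theta' T\hat D\rC)$ to be bounded.

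With $D$ replaced by $\hat D$ the Markov condition $K\leftrightarrow XT\leftrightarrow\hat\Theta'\hat D\rC$ holds, so ideal key-privacy plus message-independence reduce the task to bounding $\guess(X|\hat\Theta'\hat D\rC)$. Splitting on $\hat D$ via Property~\ref{prop:expansion}: on acceptance $\hat\Theta'=\Theta$ is revealed and $X'=X$, so Eve's guess $X''$ of $X$ from $(\Theta,\rC)$ together with Bob's correct decoding $X'$ forms a two-player strategy, and since $\{\hat D{=}1\}\subseteq\{X'{=}X\}$, Corollary~\ref{cor:monogamy} controls $P[\hat D{=}1]\guess(X|\Theta\rC,\hat D{=}1)\le\guess(\Theta|\rE)(1+|\code|/2^{d/2})$; on rejection $\hat\Theta'$ is fresh and independent, so Eve guesses $X$ from $\rC$ alone (no $\Theta$), and Property~\ref{prop:cond} with Corollary~\ref{cor:guessing} gives $P[\hat D{=}0]\guess(X|\rC,\hat D{=}0)\le\guess(X|\rC)\le\guess(\Theta|\rE)(1+|\code|/2^{d/2})$. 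In both reductions the $T$-dependence of Eve's attack is harmless, since message-independence makes $T$ independent of $(\Theta,\rE,X)$ and so it may be treated as free local randomness that does not change $\guess(\Theta|\rE)$. Summing the two branches gives $\guess(X|\hat\Theta'\hat D\rC)\le 2\,\guess(\Theta|\rE)(1+|\code|/2^{d/2})$, and pulling the factor $2$ out of the square root turns the key-privacy prefactor $\tfrac12$ into $\tfrac{\sqrt2}{2}$, which with the $2\eps_\MAC$ yields the stated bound. The main difficulty throughout is exactly the $K$-dependence of the acceptance decision, and the forgery-based surrogate is the device that isolates it into the additive $\eps_\MAC$ terms.
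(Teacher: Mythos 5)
Your proposal follows the paper's own proof very closely: the first bound is argued exactly as in the paper (split on $D$ via Property~\ref{prop:expansion}, remove the conditioning via Property~\ref{prop:cond}, then monotonicity and message-independence to return to $\guess(\Theta|\rE)$), and for the second bound you use the same devices\,---\,an idealized accept bit, the key-privacy/message-independence reduction to a guessing probability, the split into a reject branch handled by Corollary~\ref{cor:guessing} and an accept branch handled by Corollary~\ref{cor:monogamy}, the treatment of $T$ as independent local randomness, and the bookkeeping that yields $2\eps_\MAC + \frac{\sqrt2}{2}\sqrt{\cdots}$. However, there is one genuine flaw: your surrogate decision $\hat D := [\,X'\!=\!X \wedge \tilde T\!=\!T\,]$ ignores modification of the \emph{message}. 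In $\pi_\QMAC(\msg)$, Eve also controls the channel carrying $\msg$, and Bob verifies the tag against the \emph{received} message; your formula $D = [\,\tilde T = \MAC(K,\msg\|X')\,]$ silently assumes the message arrives intact. Consider the Eve who forwards the qubits and the tag untouched but replaces $\msg$ by some $\tilde\msg \neq \msg$: then with probability close to $1$ one has $X'=X$ and $\tilde T = T$, so $\hat D = 1$ and $\hat\Theta' = \Theta$, while Bob's actual check $\tilde T = \MAC(K,\tilde\msg\|X')$ fails except with probability at most $\eps_\MAC$, so $D=0$ and $\Theta'$ is refreshed. The real and surrogate states thus differ on the registers $(\Theta',D)$ with probability close to $1$, so they are \emph{not} $\eps_\MAC$-close; both of your auxiliary claims ($\hat D \leq D$ always, and that $\hat D \neq D$ forces $X'\neq X$ together with acceptance) are false for this adversary, and the triangle-inequality step producing the additive $2\eps_\MAC$ is unjustified.

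The repair is exactly what the paper does: define the surrogate bit to be $1$ if and only if $X'=X$ \emph{and} Eve has modified neither the tag \emph{nor the message}. With this definition, surrogate-accept implies real accept (everything is unmodified and $X'=X$, so Bob's check passes), and real accept with surrogate-reject means Bob accepted a pair $(\tilde\msg\|X',\tilde T)$ with $\tilde\msg\|X' \neq \msg\|X$ (the case where only the tag is altered being impossible), i.e., a forgery against the one-time $\MAC$ on the combined message, of probability at most $\eps_\MAC$; hence the two processes really are $\eps_\MAC$-close. The corrected bit still depends on $K$ only through $T$\,---\,one lets $\rC$ record whether Eve altered the tag or the message\,---\,so the Markov condition $K\leftrightarrow XT\leftrightarrow \hat\Theta'\hat D\rC$ needed for Definition~\ref{def:key-privacy} survives, and the remainder of your argument, including both branches and the factor $\sqrt2$, goes through unchanged.
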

This means that if {\em before} the execution of $\pi_\QMAC(\msg)$, it holds that Eve's guessing probability on $\Theta$ is small and $K$ looks perfectly random to her (even when given $\Theta$), then {\em after} the execution, Eve's guessing probability on (the possibly refreshed) $\Theta'$ is still small and $K$ still looks {\em almost} perfectly random to her. As such, we may then consider a {\em hypothetical} refreshing of $K$ that has almost no impact, but which brings us back to the position to apply Theorem~\ref{thm:security} again, and hence allows us to re-apply this ``preservation of security'' for the next execution, and so on. This in particular allows us to conclude that in an arbitrary {\em sequence} of executions, the MAC key $K$ stays almost perfectly random for Eve, and thus any tampering with an authenticated message will be detected by Bob except with small probability by the security of $\MAC$ (see Sect.~\ref{sec:reuse} for more details). 

\begin{remark}
For simplicity, in Theorem~\ref{thm:security} and in the remainder of this work, we assume the message $\msg$ to be arbitrary but {\em fixed}. However, it is not hard to see that we may also allow $\msg$ to be obtained by means of a measurement, applied to Eve's system $\rE$ before the execution of $\pi_\QMAC(\msg)$, i.e., Eve can choose it. The bounds of Theorem~\ref{thm:security} then hold {\em on average over the measured $\msg$}. This follows directly from Property~\ref{prop:expansion} for the bound the the guessing probability, and from a similar decomposition property for the trace distance, together with Jensen's inequality for the bound on the trace distance. 
We emphasize however, that we do assume $\msg$, even when provided by Eve, to be {\em classical}, i.e., we do not consider so-called superposition attacks. 
\end{remark}

The formal proof of Theorem~\ref{thm:security} is given below; the intuition is as follows. For the bound on the guessing probability of the (possibly updated) basis key, we have that in case Bob rejects and so the basis key is re-sampled from $\code$, Eve has obviously guessing probability $1/|\code|$. In case Bob accepts, the fact that Bob accepts may increase Eve's guessing probability. For instance, Eve may measure one qubit in, say, the computational basis, and forward the correspondingly collapsed qubit to Bob; if Bob then accepts it is more likely that this qubit had been prepared in the computational basis by Alice, giving Eve some (new) information on the basis key. However, the resulting increase in guessing probability is {\em inverse proportional} to the probability that Bob actually accepts, so that this advantage is ``canceled~out" by the possibility that Bob will not accept. 
For the bound on the ``freshness'' of $K$ (given the basis key~$\Theta'$), by key privacy it is sufficient to argue that Eve has small guessing probability for $X$. In case Bob rejects, the (refreshed) basis key is useless to her for guessing $X$, and so the task of guessing $X$ reduces to winning the game considered in Corollary~\ref{cor:guessing}. Similarly, the case where Bob accepts fits into the game in Corollary~\ref{cor:monogamy}. In both cases, we get that the guessing probability of $X$ essentially coincides with $\guess(\Theta|\rE)$.

\begin{proof}
For the first claim, we simply observe that 
\begin{align*}
\guess(\Theta'|T D \rC) 
&=  \sum_{d=0}^1 P_{D}(d) \, \guess(\Theta'|T \rC, D\!=\!d) && \text{(by Property~\ref{prop:expansion})} \\
&=  P_{D}(0) \frac{1}{|\code|}  + P_{D}(1) \, \guess(\Theta|T \rC,D\!=\!1)\hspace{-5ex} \\
&\leq  P_{D}(0) \frac{1}{|\code|}  +  \guess(\Theta|T \rC)  && \text{(by Property~\ref{prop:cond})} \\
&\leq \frac{1}{|\code|} + \guess(\Theta|T \rB_\circ \rE) && \text{(by Property~\ref{prop:monotonicity})} \\
&= \frac{1}{|\code|} + \guess(\Theta| \rB_\circ \rE) && \text{(by Definition~\ref{def:uniformity})}  \\
&= \frac{1}{|\code|} + \guess(\Theta|\rE) \enspace.
\end{align*}
where the second equality holds because $\Theta'$ is freshly chosen in case Bob rejects and $\Theta' = \Theta$ in case he accepts, and the final equality holds because of the fact that $\rho_{\rB_\circ \Theta \rE} = \tr_X \circ {\cal M}^\BB_{\Theta \rA \to X} \bigl(\Phi^+_{\rA\rB_\circ} \otimes \rho_{\Theta \rE}\bigr) = \tr_\rA \bigl(\Phi^+_{\rA\rB_\circ}\bigr) \otimes \rho_{\Theta \rE} = \mu_{\rB_\circ} \otimes \rho_{\Theta \rE}$. 

\smallskip
 
For the second claim, consider $\tilde{D}$ and $\tilde\Theta'$ as follows. $\tilde{D}$ is $1$ if $X=X'$ and Eve has not modified the tag $T$ nor the message $\msg$, and $\tilde{D}$ is $0$ otherwise (i.e., $\tilde{D}$ is an ``ideal version'' of Bob's decision), and $\tilde\Theta'$ is freshly chosen if and only $\tilde{D} = 0$. The states of $K\Theta'TD \rC$ and $K\tilde\Theta'T\tilde{D} \rC$ are identical except for when $D = 1$ but $X \neq X'$ or Eve has modified $T$ or $\msg$, which happens with probability at most $\eps_\MAC$ by the security of $\MAC$, and thus the two states are $\eps_\MAC$-close. 
Therefore, $\delta(K, \UNIF_\K |\Theta'T D \rC) \leq  \delta(K, \UNIF_\K |\tilde\Theta'T\tilde{D} \rC) + 2 \eps_\MAC$, and so it suffices to analyze the state of $K\tilde\Theta'T\tilde{D} \rC$. 
Furthermore, we may assume that Eve's state $\rC$ contains the information of whether she modified $T$ or $\msg$, so that $\tilde{D}$ can be computed from $1_{X=X'}$ when given $\rC$, and thus $\delta(K,U_\K|\tilde\Theta'\tilde{D} T \rC) \leq \delta(K,U_\K|\tilde\Theta'\,1_{X=X'} \tilde{D} T \rC) = \delta(K,U_\K|\tilde\Theta'\,1_{X=X'} T \rC)$. 


Now, since $K$ is random and independent of $X \Theta \rB_\circ \rE$, $T$ is computed as $T = \MAC(K,\msg|X)$, and $\tilde\Theta \,1_{X=X'} \rC$ is obtained by acting on $T$ and $X \Theta \rB_\circ \rE$ only (and not on $K$), we see that the conditions required in Definition~\ref{def:key-privacy} are satisfied. 
Therefore, by the key-privacy of $\MAC$, and recalling Remark~\ref{rem:msgind}, 
$$
\delta(K,U_\K|T \tilde\Theta'\,1_{X=X'} \rC)  \leq \frac12 \sqrt{ \guess(X|\tilde\Theta' \,1_{X=X'} \rC) \,|\TAG| } \enspace.
$$
Furthermore, by Property~\ref{prop:expansion}, and noting that $\tilde\Theta'$ is freshly chosen whenever $X\!\neq\!X'$, and, depending on~$\rC$, freshly chosen or equal to $\Theta$ otherwise, 
\begin{align*}
\guess(X|\tilde\Theta' \,1_{X=X'} \rC) 
&= P[X\!\neq\!X'] \, \guess(X|\tilde\Theta'\rC,X\!\neq\!X') 
+ P[X\!=\!X'] \, \guess(X|\tilde\Theta' \rC,X\!=\!X') \\
&\leq P[X\!\neq\!X'] \, \guess(X|\rC,X\!\neq\!X') 
+ P[X\!=\!X'] \, \guess(X|\Theta \rC,X\!=\!X') \enspace.
\end{align*}
For the first term, we see that 
\begin{align*}
P[X\!\neq\!X'] \, \guess(X|\rC,X\!\neq\!X') 
&\leq \guess(X|\rC)  && \text{(by Property~\ref{prop:cond})} \\
&\leq\guess(X|T\rB_\circ\rE)   && \text{(by Property~\ref{prop:monotonicity})} \\
&\leq \guess(X|\rB_\circ\rE) && \text{(by Definition~\ref{def:uniformity})}  \\[-0.5ex]
&\leq \guess(\Theta|\rE)\bigl(1 +\textstyle\frac{|\code|}{2^{d/2}}\bigr
)  \enspace, \hspace{-3ex}
\end{align*}
where the final inequality follows from Corollary~\ref{cor:guessing} by recalling that $\rho_{\Theta X \rB_\circ \rE} = {\cal M}^\BB_{\Theta \rA \to X} \bigl(\Phi^+_{\rA\rB_\circ} \otimes \rho_{\Theta \rE}\bigr)$. Similarly, writing $X''$ for the measurement outcome when measuring $\rC$ using an optimal measurement ${\cal N}_{\Theta \rC}$ (controlled by $\Theta$), we obtain 
\begin{align*}
P[X\!=\!X'] \, \guess(X|\Theta \rC,X\!=\!X')  
&\leq P[X\!=\!X'] \, P[X \!=\! X''|X\!=\!X']  \\[0.5ex]
&\leq P[X\!=\!X' \wedge X \!=\! X'']  \\
&\leq \guess(\Theta|\rE)\bigl(1 +\textstyle\frac{|\code|}{2^{d/2}}\bigr
) \enspace,
\end{align*}
where the final inequality follows from Corollary~\ref{cor:monogamy} by observing that, using uniformity of $\MAC$ (Definition~\ref{def:uniformity}) in the second equality,  
\begin{align*}
\rho_{\Theta X X X''} &= {\cal N}_{\Theta \rC \to X''} \circ {\cal M}^\BB_{\Theta \rB \to X'} \circ {\cal Q}_{T \rB_\circ \rE \to \rB \rC} \bigl( \rho_{\Theta X T \rB_\circ \rE} \bigr) \\
&= {\cal N}_{\Theta \rC \to X''} \circ {\cal M}^\BB_{\Theta \rB \to X} \circ {\cal Q}_{T \rB_\circ \rE \to \rB \rC} \bigl( \rho_{\Theta X \rB_\circ \rE} \otimes \rho_T \bigr) 
\\
&= {\cal N}_{\Theta \rC \to X''} \circ {\cal M}^\BB_{\Theta \rB \to X'} \circ {\cal Q}_{T \rB_\circ \rE \to \rB \rC} \circ {\cal M}^\BB_{\Theta \rA \to X} \bigl(\Phi^+_{\rA\rB_\circ} \otimes \rho_{\Theta \rE} \otimes \rho_T \bigr) \\
&= {\cal N}_{\Theta \rC \to X''} \circ {\cal M}^\BB_{\Theta \rB \to X'} \circ {\cal M}^\BB_{\Theta \rA \to X} \circ {\cal Q}_{T \rB_\circ \rE \to \rB \rC} \bigl(\Phi^+_{\rA\rB_\circ} \otimes \rho_{\Theta \rE} \otimes \rho_T \bigr) \\
&= {\cal N}_{\Theta \rC \to X''} \circ {\cal M}^\BB_{\Theta \rB \to X'} \circ {\cal M}^\BB_{\Theta \rA \to X} \circ {\cal Q}'_{\rE \to \rA\rB \rC} \bigl(\rho_{\Theta \rE}\bigr) 
\end{align*} 
where ${\cal Q}'_{\rE \to \rA\rB \rC}$ is the CPTP map ${\cal Q}'_{\rE \to \rA\rB \rC}(\sigma_\rE) = {\cal Q}_{T \rB_\circ \rE \to \rB \rC}(\Phi^+_{\rA\rB_\circ} \otimes \sigma_\rE \otimes \rho_T)$. 
Collecting the terms gives the claimed bound. 
\qed
\end{proof}

\subsection{Re-usability of $\QMAC$}\label{sec:reuse}

We formally argue here that Theorem~\ref{thm:security}, which analyses a {\em single} usage of $\QMAC$, implies {\em re-usability}. The reason why this is not completely trivial is that after one execution of $\pi_\QMAC$, the MAC key $K$ is not perfectly secure anymore but ``only'' almost-perfectly secure, so that Theorem~\ref{thm:security} cannot be directly applied anymore for a second execution. However, taking care of this is quite straightforward. 

Formally, we have the following result regarding the re-usability of $\QMAC$. 

\begin{proposition}\label{prop:seqex}
If Alice and Bob start off with a uniformly random key, then for a sequence $\pi_{\QMAC}(\msg_1)$, $\pi_{\QMAC}(\msg_2),\ldots$ of sequential executions of protocol $\pi_{\QMAC}$, and for any strategy for Eve and any $i \in \N$, the probability $\eps_i$ that Eve {\em modifies} $\msg_{i}$ in the execution of $\pi_{\QMAC}(\msg_{i})$ yet Bob {\em accepts} is bounded by
$$
\eps_i \leq (2i-1) \cdot \eps_\MAC +  \frac{\sqrt2}{2} \sum_{j < i} \sqrt{\frac{j}{|{\code}|} \biggl( 1+ \frac{|\code|}{2^{d/2}}\biggr) \, |\TAG|} \enspace.
$$
\end{proposition}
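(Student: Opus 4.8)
The plan is to reduce the multi-execution statement to the single-execution guarantee of Theorem~\ref{thm:security} by a hybrid argument in which, \emph{in the analysis only}, we hypothetically refresh the MAC key $K$ to a perfectly uniform and independent value before every execution. Fix Eve's (global) strategy and write ${\cal E}_j := \Exe[\pi_{\QMAC}(\msg_j)]$ for the CPTP map describing the $j$-th real execution, and let ${\cal R}$ denote the ``refresh'' map that replaces the $K$-part of a state $\rho_{K\Theta\rE}$ by a fresh $\mu_\K$, i.e.\ ${\cal R}(\rho_{K\Theta\rE}) = \mu_\K \otimes \rho_{\Theta\rE}$; note that $\delta({\cal R}(\rho),\rho) = \delta(K,\UNIF_\K|\Theta\rE)$ and that ${\cal R}$ leaves the $\Theta\rE$-marginal, hence $\guess(\Theta|\rE)$, unchanged. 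Let $\rho^{(0)} = \mu_\K \otimes \mu_\code \otimes \rho_\rE$ be the initial (uniform-key) state, define the real states $\rho^{(j)} := {\cal E}_j(\rho^{(j-1)})$, and the idealized states $\hat\rho^{(0)} := \rho^{(0)}$, $\hat\rho^{(j)} := {\cal E}_j\bigl({\cal R}(\hat\rho^{(j-1)})\bigr)$. The point of the idealization is that every ${\cal E}_j$ is now applied to a state of the form $\mu_\K \otimes \rho_{\Theta\rE}$, so Theorem~\ref{thm:security} applies verbatim at each step.

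First I would track the two quantities supplied by Theorem~\ref{thm:security} along the idealized sequence. Since $\guess(\Theta|\rE)_{\hat\rho^{(0)}} = 1/|\code|$ and each execution increases the guessing probability by at most $1/|\code|$ while ${\cal R}$ does not change it, we get $\guess(\Theta|\rE)_{{\cal R}(\hat\rho^{(j-1)})} \leq j/|\code|$, i.e.\ the $j$-th idealized execution starts from guessing probability at most $j/|\code|$. Feeding this into the trace-distance bound of Theorem~\ref{thm:security} yields
$$
\delta\bigl(\hat\rho^{(j)},{\cal R}(\hat\rho^{(j)})\bigr) = \delta(K,\UNIF_\K|\Theta\rE)_{\hat\rho^{(j)}} \leq \delta_j := 2\eps_\MAC + \frac{\sqrt2}{2}\sqrt{\frac{j}{|\code|}\Bigl(1+\frac{|\code|}{2^{d/2}}\Bigr)|\TAG|}\,,
$$
with the convention $\delta_0 := 0$, as $\hat\rho^{(0)}$ is already refreshed.

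Next I would accumulate the error between the real and idealized runs. Setting $e_j := \delta(\rho^{(j)},\hat\rho^{(j)})$, monotonicity of the trace distance under the CPTP map ${\cal E}_j$ together with the triangle inequality gives $e_j \leq \delta(\rho^{(j-1)},{\cal R}(\hat\rho^{(j-1)})) \leq e_{j-1} + \delta_{j-1}$, and since $e_0 = 0$ this telescopes to $e_j \leq \sum_{\ell<j}\delta_\ell$. Now the event ``Eve modifies $\msg_i$ yet Bob accepts'' is obtained by a fixed procedure, namely running ${\cal E}_i$ and reading off Bob's decision together with whether the delivered message differs from $\msg_i$, applied to the pre-execution state; hence its probability differs by at most the trace distance of the two possible inputs, so
$$
\eps_i \leq \eps_i^{\,\mathrm{ideal}} + \delta\bigl(\rho^{(i-1)},{\cal R}(\hat\rho^{(i-1)})\bigr) \leq \eps_i^{\,\mathrm{ideal}} + e_{i-1} + \delta_{i-1} \leq \eps_i^{\,\mathrm{ideal}} + \sum_{j<i}\delta_j\,,
$$
where $\eps_i^{\,\mathrm{ideal}}$ is the modify-yet-accept probability when execution $i$ is run on ${\cal R}(\hat\rho^{(i-1)})$.

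Finally, in the idealized $i$-th execution $K$ is perfectly uniform and independent of $\Theta$ and of Eve's entire view except through the single tag $t=\MAC(K,\msg_i\|x)$; hence a modified message $\msg_i^\ast\neq\msg_i$ is accepted only if Eve produces a valid tag for a fresh message, which by one-time security of $\MAC$ happens with probability at most $\eps_\MAC$, i.e.\ $\eps_i^{\,\mathrm{ideal}} \leq \eps_\MAC$. Collecting terms gives $\eps_i \leq \eps_\MAC + \sum_{j<i}\delta_j = (2i-1)\eps_\MAC + \frac{\sqrt2}{2}\sum_{j<i}\sqrt{\frac{j}{|\code|}(1+|\code|/2^{d/2})|\TAG|}$, which is the claimed bound. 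The main obstacle is precisely the one motivating the hybrid: Theorem~\ref{thm:security} only preserves \emph{near}-uniformity of $K$, so it cannot be iterated on the real state directly; the care required is in verifying that the hypothetical per-step refreshes cost only the accumulated trace distance $\sum_{j<i}\delta_j$ and that these refreshes leave $\guess(\Theta|\rE)$, and hence the per-step bounds $\delta_j$, intact.
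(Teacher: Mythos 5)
Your proposal is correct and follows essentially the same route as the paper's own proof: the paper likewise introduces the hypothetical sequence $\tilde\rho_{K_i\Theta_i\rE_i} = \Exe[\pi_{\QMAC}(\msg_i)](\mu_\K \otimes \tilde\rho_{\Theta_{i-1}\rE_{i-1}})$ with the MAC key refreshed before each execution, applies Theorem~\ref{thm:security} inductively to get the same per-step bounds $\guess(\Theta_i|\rE_i) \leq (i+1)/|\code|$ and $\delta_i$, and accumulates the real-vs-ideal trace distance by monotonicity and the triangle inequality before invoking $\MAC$ security on the refreshed state. Your bookkeeping via $e_j \leq e_{j-1} + \delta_{j-1}$ is just a slight reorganization of the paper's single telescoping chain $\delta(\rho_{K_i\Theta_i\rE_i},\mu_\K\otimes\tilde\rho_{\Theta_i\rE_i}) \leq \sum_{j\leq i}\delta_j$.
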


\begin{proof}
In case $i = 1$, the statement reduces to $\eps_i \leq \eps_\MAC$, which holds by construction of $\QMAC$. To argue the general case, let $\rho_{K_1 \Theta_1 \rE_1},\rho_{K_2 \Theta_2 \rE_2},\ldots$ describe the evolution of the MAC key and the basis key and Eve's information on them, given that we start with a perfect key $\rho_{K_0 \Theta_0 \rE_0} =  \mu_{\K} \otimes \mu_{\code} \otimes\rho_{\rE_0}$. Formally, $\rho_{K_i \Theta_i \rE_i}$ is inductively defined as $\rho_{K_i \Theta_i \rE_i} = \Exe[\pi_{\QMAC}(\msg_i)](\rho_{K_{i-1} \Theta_{i-1} \rE_{i-1}})$. 
For the sake of the argument, we also consider $\tilde\rho_{K_1 \Theta_1 \rE_1},\tilde\rho_{K_2 \Theta_2 \rE_2},\ldots$ obtained by means of setting $\tilde\rho_{K_0 \Theta_0 \rE_0} = \rho_{K_0 \Theta_0 \rE_0}$ and $\tilde\rho_{K_i \Theta_i \rE_i} = \Exe[\pi_{\QMAC}(\msg_i)](\mu_{\K} \otimes \tilde\rho_{\Theta_{i-1} \rE_{i-1}})$, i.e., the evolution of the keys and Eve's information in a hypothetical setting where the MAC key is refreshed before every new execution. For these latter states $\tilde\rho_{K_i \Theta_i \rE_i} $, we can inductively apply Theorem~\ref{thm:security} and conclude that 
$$
\guess(\Theta_i|\rE_i) \leq \frac{i+1}{|{\code}|} \vspace{-1.7ex}
$$
and
$$
\delta(\tilde\rho_{K_i \Theta_i \rE_i},\mu_\K \otimes \tilde\rho_{\Theta_i \rE_i}) \leq \delta_i := 2\eps_\MAC + \frac{\sqrt2}{2}\sqrt{\frac{i}{|{\code}|} \biggl( 1+ \frac{|\code|}{2^{d/2}}\biggr) \, |\TAG|}  
$$
for any $i \in \N$. 
But now, for the original states $\rho_{K_1 \Theta_1 \rE_1},\rho_{K_2 \Theta_2 \rE_2},\ldots$, from the triangle inequality  we obtain that
\begin{align*}
\delta(\rho_{K_i \Theta_i \rE_i},\mu_\K \otimes \tilde\rho_{\Theta_i \rE_i}) &\leq 
\delta(\rho_{K_i \Theta_i \rE_i},\tilde\rho_{K_i \Theta_i \rE_i}) + \delta(\tilde\rho_{K_i \Theta_i \rE_i},\mu_\K \otimes \tilde\rho_{\Theta_i \rE_i}) \\[0.7ex]
&\leq \delta(\rho_{K_{i-1} \Theta_{i-1} \rE_{i-1}},\mu_\K \otimes \tilde\rho_{\Theta_{i-1} \rE_{i-1}}) + \delta_i \\
&\leq \sum_{j \leq i} \delta_j \enspace,
\end{align*}
where the last inequality is by induction (where the base case $i = 0$ is trivially satisfied). 
It now follows by basic properties of the trace distance that we have $\eps_{i+1} \leq \eps_\MAC + \sum_{j \leq i} \delta_j$. This proves the claim. 
\qed
\end{proof}

\subsection{Choosing the Parameters}\label{sec:pars}

Let $\lambda \in \N$ be the security parameter. 
Consider a $\MAC$ with $\eps_\MAC = 2^{-\lambda}$ and $|\TAG| = 2^\lambda$. This can for instance be achieved with the constructions suggested in Sect.~\ref{sec:scheme}. Also, consider a code $\code$ with $|\code| = 2^{3\lambda}$ and $d = 6\lambda$, so that $|\code|/2^{d/2} = 1$. The description of the basis key $\theta$ thus requires $3\lambda$ bits, and, by Singleton bound, it is necessary that $n \geq 9\lambda-1$.
Then, the bound in Proposition~\ref{prop:seqex} becomes
$$
\eps_{i+1} \leq (2i+1) \cdot 2^{-\lambda} + \sum_{j \leq i} \frac{\sqrt2}{2}\sqrt{\frac{j}{2^{3\lambda}} \biggl( 1+ \frac{2^{3\lambda}}{2^{3\lambda}}\biggr) \,2^\lambda} = \Big(2i + 1 + \sum_{j \leq i}\sqrt{j}\Big) 2^{-\lambda} \enspace.
$$
Hence, the error probability increases at most as $\big(i \sqrt{i} + 2i + 1\big)  2^{-\lambda}$ with the number $i$ of executions.

\subsection{Impersonation Attacks}

The above analysis considers a {\em substitution attack}, where first Alice is activated and is given a message $\msg$ (possibly produced by Eve) and she computes and sends the quantum authentication tag, and then Bob decides to accept or reject depending on the received (and possibly modified) tag. It is quite clear that our results carry over to an {\em impersonation attack}, where Alice is inactive and Eve tries to get a message accepted by Bob. Indeed, in this case, Bob will reject almost with certainty and $k$ remains close to uniformly random (while $\theta$ needs to be refreshed). 

However, a subtle issue in this context is the following.%
\footnote{This issue was pointed out to us by Christopher Portmann. }
Assume that after such an impersonation attack, Bob uses the key $k$ (which is still ensured to be secure) in another application, which may leak information on $k$ to Eve (like as one-time-pad encryption key on a message that is partly known to Eve). If Alice, who may be unaware of what has been going on, now gets activated on a message $\msg$ that Eve chose {\em dependent} on the information she just got on $k$ (and possibly by performing a measurement on her quantum state) and computes and sends a quantum authentication tag for $\msg$ using the key pair $(\theta,k)$, then this may potentially leak {\em additional} information on $k$ to Eve (which could then allow Eve to learn more information on the one-time-pad encrypted message). Indeed, this kind of attack is not covered by our analysis because our proof assumed $\msg$ to be {\em independent} of $k$. 

An easy way to overcome this issue is to enforce $\MAC$ to be of the form $\MAC(A||b,m) = A m + b$, where $A$ is a random (possibly Toeplitz-like) matrix; this was anyway our suggestion for $\MAC$. 
Then, we can write the tag $t$ as
$$
   t = \MAC(A||b,\msg||x) = A \, \Big[\begin{array}{c} \msg \\[-0.6ex] x \end{array} \Big] + b = A\, \Big[\begin{array}{c} \msg \\[-0.6ex] 0 \end{array} \Big] + A\, \Big[\begin{array}{c} 0 \\[-0.6ex] x \end{array} \Big] + b \, ,
$$
and since $x$ is proven to have high min-entropy (and is independent of $A$, unlike $\msg$), this acts as a strong extractor and thus $A \big[\begin{smallmatrix}0\\x\end{smallmatrix}\big]$ is close to uniformly random, given $k = A||b$ and the BB84 qubits $B_\circ$ and Eve's quantum state {\em before} she has done the measurement to obtain $\msg$, and thus also $t$ is close to uniformly random, given $k$ and the BB84 qubits and the Eve's quantum state {\em after} she has done the measurement. As such, the quantum authentication tag does {\em not} leak any (additional) information on $k$ to Eve in the above kind of attack.

\section{Extensions and Variations}

We show how to modify our scheme $\QMAC$ as to offer encryption as well, i.e., to produce an authenticated encryption of $\msg$, and how to deal with noise in the quantum communication; we start with the latter since this is more cumbersome. At the end of the section, we show how to tweak our schemes so as to be able to authenticate and/or encrypt {\em quantum} messages as well, and we discuss some variations.

\subsection{Dealing with Noise}

In order to deal with noise in the quantum communication, we introduce the following primitive. We consider a keyed hash function $\SS: {\cal L} \times \set{0,1}^n \to {\cal S}$ that has the property that given the key $\ell$, the {\em secure sketch} $s = \SS(\ell,x)$ of the message $x$, and a ``noisy version'' $x'$ of $x$, i.e., such that $d_H(x,x') \leq \noise\cdot n$ for some given noise parameter $\varphi < \frac12$, the original message $x$ can be recovered except with probability $\eps_\SS$. Additionally, we want $\SS$ to satisfy the message-independence and ideal key-privacy properties from Definitions~\ref{def:uniformity} and~\ref{def:key-privacy}. 
Such constructions exist for small enough $\phi > 0$, as discussed in Appendix~\ref{sec:SSConstruction}, based on techniques by Dodis and Smith~\cite{DS05}. 

Then, the key for our noise-tolerant quantum message authentication code $\QMAC^*$ consists of a (initially) uniformly random MAC key $k \in \K$ for $\MAC$, an (initially) uniformly random secure-sketch key $\ell \in \cal L$ for $\SS$, and an (initially) random and independent basis key $\theta$, chosen from the code $\code \subset \set{0,1}^n$, and the scheme works as described in Figure~\ref{fig:qmacx}.

\begin{figure}
\begin{mybox}
\begin{description}\setlength{\parskip}{1ex}
\item{$\QMAC^*\!.\Auth(k\|\ell\|\theta, \msg)$: } Choose a uniformly random $x \in \set{0,1}^n$ and output $n$ qubits $\rB_\circ$ in state $H^\theta \ket{x}$ together with the secure sketch $s = \SS(\ell,x)$ and the tag $t = \MAC(k,\msg\|x\|s)$. 
\item{$\QMAC^*\!.\Verify(k\|\ell\|\theta, \msg,t)$: } Measure the qubits $\rB_\circ$ in bases $\theta$ to obtain~$x'$, recover (what is supposed to be) $x$ using the secure sketch $s$, and check the tag~$t$. If this check fails or $d_H(x,x') > \noise\cdot n$ then output $0$, else~$1$. 
\item{$\QMAC^*\!.\Refresh(k\|\ell\|\theta)$: } Choose uniformly random $\theta' \in \code$ and output $k\|\ell\|\theta'$. 
\end{description}
\end{mybox}\vspace{-2.5ex}
\caption{The noise-tolerant quantum message authentication code $\MAC^*$. }\label{fig:qmacx}
\vspace{-2ex}
\end{figure}


\begin{theorem}\label{thm:securitywithnoise}
If the state before the execution of $\pi_{\QMAC^*}(\msg)$ is of the form $\rho_{K\Theta \rE} = \mu_\K \otimes \rho_{\Theta \rE}$, then for any Eve the state $\rho_{K \Theta' \rE'} = \Exe[\pi_{\QMAC^*}(\msg)](\rho_{K\Theta \rE})$ after the execution satisfies 
$$
\guess(\Theta'|\rE') \leq  \guess(\Theta|\rE) + \frac{1}{|\code|}
$$
and
$$
\delta(KL,U_{\K \times {\cal L}} |\Theta' \rE') 
%
\leq  2(\eps_\MAC + \eps_\SS) + \sqrt{\guess(\Theta|\rE) \biggl( 2\!+\! \frac{|\code|}{2^{d/2}} \!+\! \frac{|\code|\cdot 2^{h(\noise)n}}{2^{d}} \biggr) |\TAG| |{\cal S}|  } \enspace .
$$
\end{theorem}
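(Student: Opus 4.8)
The plan is to mirror the proof of Theorem~\ref{thm:security}, replacing the exact‑match condition ``$X\!=\!X'$'' by Bob's noise‑tolerant acceptance ``$d_H(X,X')\leq\noise n$'', and bookkeeping the extra public output $S$ and the extra key $L$.

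\textbf{The first claim} carries over essentially verbatim from Theorem~\ref{thm:security}: I split $\guess(\Theta'|T S D\rC)$ on $D$ by Property~\ref{prop:expansion}, use Property~\ref{prop:cond} in the accept case to cancel the $1/P_D(1)$ blow‑up against $P_D(1)$, and Property~\ref{prop:monotonicity} to pass from $\rC$ to $\rB_\circ\rE$. The only change is that Eve now also holds $S$, so the step that drops $T$ must drop $T$ \emph{and} $S$; this is justified by message‑independence of \emph{both} $\MAC$ and $\SS$ (Definition~\ref{def:uniformity}), which makes $(T,S)$ independent of $\Theta\rB_\circ\rE$, whence $\guess(\Theta|T S\rB_\circ\rE)=\guess(\Theta|\rB_\circ\rE)=\guess(\Theta|\rE)$.

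\textbf{For the second claim} I would first replace Bob's real decision $D$ and refresh $\Theta'$ by idealized versions $\tilde D,\tilde\Theta'$, where $\tilde D=1$ now demands both $d_H(X,X')\leq\noise n$ and that Eve left $\msg,S,T$ untouched. The real and ideal states differ on two events: a $\MAC$‑forgery (probability $\leq\eps_\MAC$) and a secure‑sketch recovery failure, i.e.\ $d_H(X,X')\leq\noise n$ yet the decoder fails to return $X$ (probability $\leq\eps_\SS$); this yields the additive $2(\eps_\MAC+\eps_\SS)$. Absorbing into $\rC$ the bit telling whether Eve modified $\msg,S,T$, it then suffices to bound $\delta(KL,U_{\K\times{\cal L}}\,|\,TS\,\tilde\Theta'\,1_{d_H(X,X')\leq\noise n}\,\rC)$. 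The crucial observation is that $(S,T)=\bigl(\SS(L,X),\MAC(K,\msg\|X\|S)\bigr)$ is exactly the \emph{sequarallel} composition (Proposition~\ref{prop:seqcomposition}) of $\SS$ and $\MAC(\,\cdot\,,\msg\|\,\cdot\,)$; since both offer message‑independence and ideal key‑privacy, the composed hash is message‑independent with $2$‑key‑privacy and range ${\cal S}\times\TAG$. Checking the Markov condition of Definition~\ref{def:key-privacy} as in Theorem~\ref{thm:security} (the side information is produced from $X,S,T,\Theta,\rB_\circ,\rE$ without touching $K,L$), key‑privacy together with Remark~\ref{rem:msgind} gives the bound $\sqrt{\guess(X|\tilde\Theta'\,1_{d_H(X,X')\leq\noise n}\,\rC)\cdot|{\cal S}|\,|\TAG|}$, so everything reduces to bounding this guessing probability. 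Splitting it on $1_{d_H(X,X')\leq\noise n}$ via Property~\ref{prop:expansion}: on the reject branch $\tilde\Theta'$ is freshly resampled and hence useless, so Property~\ref{prop:cond}, Property~\ref{prop:monotonicity}, message‑independence and Corollary~\ref{cor:guessing} bound it by $\guess(\Theta|\rE)\bigl(1+|\code|/2^{d/2}\bigr)$, exactly as in the noise‑free proof; on the accept branch, bounding $\tilde\Theta'$ by $\Theta$ and writing $X''$ for an optimal $\Theta$‑controlled measurement of $\rC$, it is at most $P[d_H(X,X')\leq\noise n\wedge X''\!=\!X]$. Summing the two branches produces the factor $2+|\code|/2^{d/2}+|\code|\,2^{h(\noise)n}/2^{d}$.

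\textbf{The hard part} is the one genuinely new ingredient: a noise‑tolerant variant of Proposition~\ref{prop:monogamy}/Corollary~\ref{cor:monogamy} for the ``Bob‑measures‑correctly‑within‑a‑ball'' game, namely $P[d_H(X,X')\leq\noise n\wedge X''\!=\!X]\leq\guess(\Theta|\rE)\bigl(1+|\code|\,2^{h(\noise)n}/2^{d}\bigr)$. I would prove it by the TFKW argument~\cite{TFKW13}: assume (Naimark) Charlie's conditional measurements are projections $\{Q^\theta_x\}_x$, replace Bob's success projections by the ball operators $P^\theta_x=\sum_{x':d_H(x,x')\leq\noise n}H^\theta\proj{x'}H^\theta$, and set $\Pi^\theta=\sum_x H^\theta\proj{x}H^\theta\otimes P^\theta_x\otimes Q^\theta_x$. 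Two points are essential. First, since the $\rA$‑projections $H^\theta\proj{x}H^\theta$ stay mutually orthogonal, $\Pi^\theta$ is \emph{still a projection}, so the diagonal ($\delta=0$) term of Lemma~2.2 of~\cite{TFKW13} contributes exactly $1$; this is precisely why the accept branch costs only an additive $1$ and the ball blow‑up attaches solely to the off‑diagonal term. Second, for $\theta\neq\theta'$ I bound $\|\Pi^\theta\Pi^{\theta'}\|\leq\|\Gamma^\theta\Delta^{\theta'}\|$ as in~\cite{TFKW13}, then decompose the ball into fixed shifts, $\Gamma^\theta=\sum_{e:\,|e|\leq\noise n}\Gamma^\theta_e$ with $\Gamma^\theta_e=\sum_x H^\theta\proj{x}H^\theta\otimes H^\theta\proj{x\oplus e}H^\theta\otimes\I$; each $\Gamma^\theta_e$ differs from the unshifted ``Bob‑correct'' operator only by a unitary relabeling on $\rB$ that commutes with $\Delta^{\theta'}$ (which acts as $\I$ on $\rB$), so $\|\Gamma^\theta_e\Delta^{\theta'}\|=\|\Gamma^\theta_0\Delta^{\theta'}\|\leq 2^{-d_H(\theta,\theta')}\leq 2^{-d}$, and the triangle inequality over the $\leq 2^{h(\noise)n}$ shifts gives $\|\Gamma^\theta\Delta^{\theta'}\|\leq 2^{h(\noise)n}/2^{d}$. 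Feeding this into Lemma~2.2 of~\cite{TFKW13} yields the bound $1/|\code|+2^{h(\noise)n}/2^{d}$ for $\rho_{\Theta\rE}=\mu_\code\otimes\sigma_\rE$, and lifting to an arbitrary $\rho_{\Theta\rE}$ via Property~\ref{prop:bound} exactly as in Corollary~\ref{cor:monogamy} gives the claimed accept‑branch bound (the reduction of $\rho_{\Theta X X' X''}$ to the $\mathcal{Q}'_{\rE\to\rA\rB\rC}(\rho_{\Theta\rE})$ form, factoring out the fixed $\rho_{TS}$ by message‑independence, is identical to the CPTP chain in Theorem~\ref{thm:security}). The only real risk is the bookkeeping in the shift decomposition, but the relabeling‑unitary argument makes every per‑shift norm identical to the established ``Bob‑measures‑correctly'' case, so the sole new cost is the $2^{h(\noise)n}$ multiplicity, exactly as the statement predicts.
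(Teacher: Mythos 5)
Your high-level plan is exactly the paper's: idealize Bob's decision (incurring the $2(\eps_\MAC+\eps_\SS)$ loss), treat $(S,T)$ as a single hash of $X$ via ``sequarallel'' composition (Proposition~\ref{prop:seqcomposition}) so that $2$-key-privacy with range ${\cal S}\times\TAG$ applies, split the guessing probability on $1_{d_H(X,X')\leq\noise n}$, handle the reject branch by Corollary~\ref{cor:guessing}, and reduce the accept branch to a noise-tolerant monogamy game. The paper does all of this too, delegating the accept branch to Corollary~\ref{cor:monogamy+} together with Remark~\ref{rem:honestBob} in the appendix.

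The gap is in your proof of that game bound, at the claim $\bigl\|\Gamma^\theta_0\Delta^{\theta'}\bigr\|\leq 2^{-d_H(\theta,\theta')}$. This is false; the correct value is $2^{-d_H(\theta,\theta')/2}$, and it is tight. Writing $A^\theta_x = H^\theta\proj{x}H^\theta$, the standard squaring argument gives
$$
\Gamma^\theta_0\Delta^{\theta'}\Gamma^\theta_0 \;=\; \sum_{x} A^\theta_x\otimes A^\theta_x\otimes\Bigl(\sum_y \bigl|\bra{x}H^{\theta\oplus\theta'}\ket{y}\bigr|^2\,Q^{\theta'}_y\Bigr) \;\leq\; 2^{-d_H(\theta,\theta')}\,\Gamma^\theta_0 \enspace,
$$
so $\bigl\|\Gamma^\theta_0\Delta^{\theta'}\bigr\| = \bigl\|\Gamma^\theta_0\Delta^{\theta'}\Gamma^\theta_0\bigr\|^{1/2}\leq 2^{-d_H(\theta,\theta')/2}$, and the square root cannot be removed: for $n=1$, $\theta=0$, $\theta'=1$, trivial $\H_\rC$ and $Q^{\theta'}_0=1$, $Q^{\theta'}_1=0$, one computes $\Gamma^\theta_0\Delta^{\theta'}\Gamma^\theta_0=\frac12\Gamma^\theta_0$, i.e.\ $\bigl\|\Gamma^\theta_0\Delta^{\theta'}\bigr\|=2^{-1/2}>2^{-1}$. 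Indeed, the paper's own proof of Proposition~\ref{prop:monogamy+} bounds exactly these per-pair norms and only claims $2^{-d/2}$ for them.

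The root cause is that the ``Bob measures correctly'' improvement from $2^{-d/2}$ to $2^{-d}$ (the remark after Proposition~\ref{prop:monogamy}, and Remark~\ref{rem:honestBob}) is a property of $\|\Pi^\theta\Pi^{\theta'}\|$ with \emph{both} Bob's and Charlie's projections retained. Your reduction $\|\Pi^\theta\Pi^{\theta'}\|\leq\|\Gamma^\theta\Delta^{\theta'}\|$ throws Charlie out of one factor and Bob out of the other\,---\,which is precisely what makes your relabeling unitary on $\rB$ commute with $\Delta^{\theta'}$\,---\,and at that point the extra square root is irretrievably lost; the honest-Bob bound cannot be imported, since $\|\Pi^\theta\Pi^{\theta'}\|\leq\|\Gamma^\theta_0\Delta^{\theta'}\|$ transfers bounds in the wrong direction. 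What your argument does correctly prove is the per-shift bound $2^{-d/2}$, hence the accept-branch bound $\guess(\Theta|\rE)\bigl(1+|\code|\,2^{h(\noise)n}/2^{d/2}\bigr)$\,---\,exactly the paper's Corollary~\ref{cor:monogamy+} with $|{\cal Y}|=1$\,---\,and therefore the theorem with $2^{d/2}$ in place of $2^{d}$ in the last term, which is strictly weaker than the stated bound. (For what it is worth, the paper itself obtains the $2^{d}$ term only by invoking Remark~\ref{rem:honestBob}, which it asserts without proof; but your particular route to that remark cannot work, as it passes through a false intermediate inequality.)
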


\begin{proof}
The proof of the first statement, i.e., the bound on $\guess(\Theta'|\rE')$ is exactly like in the proof of Theorem~\ref{thm:security}, with the only exception that in the one expression where the tag $T$ appears (i.e. in the expression obtained by using Property~\ref{prop:monotonicity}), now $S$ appears as well (along with $T$); but like $T$, it disappears again in the next step due to message-independence.

For the bound on $\delta(KL,U_{\K \times {\cal L}} |\Theta' \rE')$ we follow closely the proof of Theorem~\ref{thm:security} but with the following modifications. 

\begin{enumerate}\setlength{\parskip}{1ex}
\item The key $K$ is replaced by the key pair $(K,L)$, and the tag $T$ by the tag-secure-sketch pair $(T,S)$, and we observe that we can understand $(T,S)$ to be the hash of the input $X$ under key $(K,L)$ with respect to a hash function that satisfies message-independent and (almost) key-privacy. Indeed, this composed hash function can be understood as being obtained by means of Proposition~\ref{prop:seqcomposition}. As such, whenever we argue by means of message-independence (Definition~\ref{def:uniformity}) or key-privacy (Definition~\ref{def:key-privacy}) in the proof of Theorem~\ref{thm:security}, we can still do so, except that we need to adjust the bound on the uniformity of the key to the new\,---\,and now composite\,---\,hash function. 
\item The auxiliary random variable $\tilde{D}$, and correspondingly $\tilde\Theta'$, is defined in a slightly different way: $\tilde{D}$ is $1$ if $X \epsclose[\noise] X'$ and Eve has not modified the tag $T$, the secure-sketch $S$, nor the message $\msg$. The ``real'' state with $D$ and $\Theta'$ is then $(\eps_\MAC + \eps_\SS)$-close to the modified one with $\tilde{D}$ and $\tilde\Theta'$ instead. Correspondingly, the decomposition of the distance to be bounded is then done with respect to the indicated random variable $1_{X \epsclose[\noise] X'}$ instead of $1_{X = X'}$. 
\item When bounding the probability $P[X \epsclose[\noise]\! X' \wedge\, X \!=\! X'']$, we refer to the game analyzed in Corollary~\ref{cor:monogamy+} in Appendix~\ref{app:monogamy+}, which applies to the situation here where some slack is given for Bob's guess. 
\end{enumerate}
The claimed bound is then obtained by adjusting terms according to the above changes: update the bounds obtained by applying Definition~\ref{def:key-privacy} to the updated bound $\sqrt{\guess(X|\cdots) \, |\TAG| \, |{\cal S}|}$, obtained by means of Proposition~\ref{prop:seqcomposition}, and inserting the $2^{h(\noise)n}$ blow-up when using Corollary~\ref{cor:monogamy+} instead of Corollary~\ref{cor:monogamy}, but making use of the observation in Remark~\ref{rem:honestBob}. 
\qed
\end{proof}

In essence, compared to the case with no noise, we have an additional loss due to the $|{\cal S}|$ term, whereas we can neglect the term with $2^{h(\noise)n}$ for small enough~$\varphi$. To compensate for this additional loss, we need to have $\varsigma = \log|{\cal S}|$ additional bits of entropy in $\Theta$, i.e., we need to choose $\code$ with $|\code| = 2^{3\lambda+\varsigma}$ and $d = 6\lambda+2\varsigma$. By Singleton bound, this requires $n \geq 9\lambda + 3\varsigma - 1$, and thus puts a bound $\varsigma < n/3$ on the size of the secure sketch, and thus limits the noise parameter~$\phi$.%
\footnote{We recall that, when using a $\delta$-biased family of codes to construct the secure sketch $\SS$, as discussed in the Appendix~\ref{sec:SSConstruction}, then $\varsigma$ does not correspond exactly to the size of the syndrome given by the code, but is determined by the parameter $\delta$, and is actually somewhat larger than the size of the syndrome. }

\subsection{Adding Encryption}\label{sec:Enc}

Adding encryption now works pretty straightforwardly. Concretely, our quantum encryption scheme with key recycling $\QENC^*$ is obtained by means of the following modifications to $\QMAC^*$. 
Alice and Bob extract additional randomness from $x$ using an extractor that offers message-independence and key-privacy, and use the extracted randomness as one-time-pad key to en-/decrypt $\msg$. Finally, the resulting ciphertext $c$ is authenticated along with $x$ and~$s$; this is in order to offer authenticity as well and can be omitted if privacy is the only concern. 

Security can be proven along the same lines as Theorem~\ref{thm:security}, respectively Theorem~\ref{thm:securitywithnoise} for the noise-tolerant version, and Proposition~\ref{prop:seqex}: we simply observe by means of Proposition~\ref{prop:parcomposition} and~\ref{prop:seqcomposition} that the composition of computing the triple $c$, $s$ and $t = \MAC(k,x\|c\|s)$ from $x$ constitutes a keyed hash function that offers message-independence and key-privacy, and then we can argue exactly as above to show that the (possibly refreshed) key stays secure over many executions. Also, given that the key is secure before an execution, we can control the min-entropy in $X$ as in the proof of Theorem~\ref{thm:securitywithnoise} and argue almost-perfect security of the extracted one-time-pad key, implying privacy of the communicated message. 

In order to accommodate for the additional entropy that is necessary to extract this one-time-pad key, which is reflected in the adjusted range of the composed keyed hash function, we now have to choose $\code$ with $|\code| = 2^{3\lambda+\varsigma+m}$ and $d = 6\lambda+2\varsigma+2m$, where $m = \log|\MSG|$; this requires $n \geq 9\lambda + 3\varsigma + 3m - 1$ by Singleton bound.

\subsection{Optimality of the Key Recycling}

Our aim was, like in~\cite{DPS05,DPS14}, to minimize the number of fresh random bits needed for the key refreshing. In our constructions, where the key is refreshed simply by choosing a new basis key $\theta$, this number is obviously given by the number of bits needed to represent $\theta$, i.e., in the above encryption scheme $\QENC^*$, it is
$$
\log|\code| = 3\lambda+\varsigma +m \enspace. 
$$
This is close to optimal for large messages and assuming almost no noise, so that $m \gg \lambda,\varsigma$. 
Indeed, assuming that Eve knows the encrypted message, i.e., we consider a known-plaintext attack, it is not hard to see that for any scheme that offers (almost) perfect privacy of the message, by simply keeping everything that is communicated from Alice to Bob, in particular by keeping all qubits that Alice communicates (which will most likely trigger Bob to reject), Eve can always learn (almost) $m$ bits of Shannon information on the key. As such, it is obviously necessary that the key is updated with (almost) $m$ fresh bits of randomness in case Bob rejects, since otherwise Eve will soon have accumulated too much information on the key.  

Note that~\cite{DPS05,DPS14} offers a rigorously proven bound (of roughly $m$) on the number of fresh bits necessary for key refreshing. However, their notion of key refreshing is stronger than what we require: they require that the refreshed key is close to random and independent of Eve, whereas we merely require that the refreshed key is ``secure enough'' as to ensure security of the primitive, i.e., authenticity in $\QMAC$ or $\QMAC^*$, and privacy (and authenticity) in $\QENC^*$. Indeed, in our construction we do not require that the basis key is close to random, only that it is hard to guess. However, the above informal argument shows that the bound still applies. 

Similarly, one can argue that in any message authentication scheme with error probability $2^{-\lambda}$, by keeping everything Eve can obtain $\lambda$ bits of information on the key. Thus, in case of almost no noise, our scheme $\QMAC^*$ is optimal up to the factor $3$. 

In our constructions, the number of fresh random bits needed for the key refreshing increases with larger noise. In particular in $\QMAC^*$, $\varsigma$ will soon be the dominating term in case we increase the noise level. 
We point out that it is not clear whether such a dependency is necessary, as we briefly mention in Sect.~\ref{sec:conclusion}.

\subsection{Supporting Quantum Messages}

The approach in Sect.~\ref{sec:Enc} of extracting a (one-time pad) key also gives us the means to authenticate and/or encrypt {\em quantum} messages: we simply use the extracted key as quantum-one-time-pad key~\cite{AMTW00}, or as key for a quantum message authentication code~\cite{BCGST02}. 
However, 
when considering arbitrary quantum messages, honest users anyway need a quantum computer, so one might just as well use the scheme by Damg{\aa}rd \etal to communicate a secret key and use this key for a quantum-one-time-pad or for quantum message authentication, or resort to~\cite{GYZ16,Por16}, which additionally offer security against superposition attacks.

\subsection{Variations}\label{sec:vars}

We briefly mention a few simple variations of our schemes. The first variation is as follows. In $\QMAC$, instead of choosing $x$ uniformly at random and computing the tag $t$ as $t = \MAC(k,\msg\|x)$, we can consider a fixed tag $t_\circ \in \TAG$, and choose $x$ uniformly at random subject to $\MAC(k,\msg\|x) = t_\circ$. Since $t_\circ$ is fixed, it does not have to be sent along. In case the classical $\MAC$, as a keyed hash function, is of the form as in Proposition~\ref{prop:hashconstruction}, meaning that the tag is one-time-pad encrypted (which in particular holds for the canonical examples suggested in Sect.~\ref{sec:scheme}), then Theorem~\ref{thm:security} and Proposition~\ref{prop:seqex} still hold. 
Indeed, if $\MAC$ is of this form then the concrete choice of $t_\circ$ is irrelevant for security: if Theorem~\ref{thm:security} would fail for one particular choice of $t_\circ$ then it would fail for any choice, and thus also for a randomly chosen tag, which would then contradict Theorem~\ref{thm:security} for the original $\QMAC$. 
Similarly, in $\QMAC^*$ and $\QENC^*$ we can fix the tag $t$ and the secure sketch $s$ (and ciphertext $c$), and choose $x$ subject to the corresponding restrictions.   

A second variation is to choose the basis key $\theta$ not as a code word, but uniformly random from $\set{0,1}^n$. As a consequence, the bounds on the games analyzed in Sect.~\ref{sec:GuessingGames} change\,---\,indeed, the game analyzed in Proposition~\ref{prop:monogamy} then becomes the monogamy-of-entanglement game considered and analyzed in~\cite{TFKW13}\,---\,and therefore we get different bounds in Theorem~\ref{thm:security}, but conceptually everything should still work out. 
Our goal was to minimize the number of fresh random bits needed for the key refreshing, which corresponds to the number of bits necessary to describe $\theta$; this allows us to compare our work with~\cite{DPS05,DPS14} and show that our encryption scheme performs (almost) as good as theirs in this respect. And with this goal in mind, it makes sense to choose $\theta$ as a codeword: it gives the same guessing probability for $x$ but asks for less entropy in $\theta$. Choosing $\theta$ uniformly random from $\set{0,1}^n$ seems to be the preferred choice for minimizing the quantum communication instead, which would be a very valid objective too.  

As an interesting side remark, we observe that with the above variations, our constructions can be understood as following the design principle of the scheme originally proposed by Bennett \etal of encrypting and adding redundancy to the message, and encoding the result into BB84 qubits. 

Finally, a last variation we mention is to use the {\em six-state} encoding instead of the BB84 encoding. Since the three bases of the six-state encoding have the same so-called maximal overlap, the bounds in Sect.~\ref{sec:GuessingGames} carry over unchanged, but we get more freedom in choosing the code $\code$ in $\set{0,1,2}^n$ so that fewer qubits need to be communicated for the same amount of entropy in $x$. Also, when choosing the bases uniformly at random in $\set{0,1,2}^n$, as in the variation above, we get a slightly larger entropy for $x$ when using the six-state encoding.

\section{Conclusion, and Open Problems}\label{sec:conclusion}

We reconsider one of the very first problems that was posed in the context of quantum cryptography, even before QKD, and we give the first solution that offers a rigorous security proof {\em and} does not require any sophisticated quantum computing capabilities from the honest users. 
However, our solution is not the end of the story yet. An intriguing open problem is whether it is possible to do the error correction in a more straightforward way, by just sending the syndrome of $x$ with respect to a {\em fixed} suitable code, rather than relying on the techniques from \cite{DS05}.  
In return, the scheme would be simpler, it could take care of more noise\,---\,Dodis and Smith are not explicit about the amount of noise their codes can correct but it appears to be rather low\,---\,and, potentially, the number of fresh random bits needed for key refreshing might not grow with the amount of noise. 
Annoyingly, it {\em looks} like our scheme should still be secure when doing the error correction in the straightforward way, but our proof technique does not work anymore, and there seems to be no direct fix. 

From a practical perspective, it would be interesting to see to what extent 
it is possible to optimize the quantum communication rather than the key refreshing, e.g., by using BB84 qubits with fully random and independent bases, and whether is it possible to beat QKD in terms of quantum communication. 

\section*{Acknowledgments}

We would like to thank Ivan Damg{\aa}rd and Christian Schaffner for early discussions on the topic, and Christopher Portmann for various discussions and for comments on previous versions of the paper.

\begin{appendix}

\section*{APPENDIX}

\section{Yet Another (Version of the) Guessing Game}\label{app:monogamy+}

We consider a variant of the guessing game from Section~\ref{sec:GuessingGames} where Bob and Charlie need to guess Alice's measurement outcome. In the variation considered here, we give some slack to Bob in that it is good enough if his guess is close enough (in Hamming distance) to Alice's measurement outcome, and Charlie is given some (deterministic) classical side information on Alice's measurement outcome before he has to announce his guess.%
\footnote{Taking care of such side information, given to Charlie, on Alice's measurement outcome is not needed for our application, but we get it almost for free. }
We show that, if the minimal distance $d$ of the code $\code$ is large enough, this does not help Bob and Charlie significantly. This is in line with the intuition that, for large enough $d$, the optimal strategy for Bob and Charlie is to pre-guess Alice's choice of bases.

\begin{proposition}\label{prop:monogamy+}
Let $\H_\rA$ be a $n$-qubit system, and let $\H_\rB$ and $\H_\rC$ be arbitrary quantum systems. Also, let $0 \leq \noise \leq \frac12$ be a parameter and $f:\set{0,1}^n \to {\cal Y}$ a function. Consider a state $\rho_{\Theta \rA\rB\rC} = \mu_{\code} \otimes \rho_{\rA\rB\rC} \in \dens(\code \otimes \H_\rA \otimes \H_\rB \otimes \H_\rC)$,
and let 
$$
\rho_{\Theta X X' X''} =  {\cal N}_{\Theta f(X) \rC \to X''} \circ {\cal N}_{\Theta \rB \to X'}\circ {\cal M}^\BB_{\Theta \rA \to X}\bigl(\rho_{\Theta\rA\rB\rC}\bigr)
$$
where ${\cal N}_{\Theta \rB \to X'}$ is an arbitrary measurement of system $\rB$ controlled by $\Theta$, and ${\cal N}_{\Theta f(X) \rC \to X''}$ is an arbitrary measurement of system $\rC$ controlled by $\Theta$ and $f(X)$. \\[0.3ex]
Then, it holds that
$$
P[X' \!\epsclose\! X \wedge X'' \!=\! X] \leq \frac{1}{|\code|} + \frac{2^{h(\noise)n} \cdot |{\cal Y}|}{2^{d/2}} \enspace,
$$
where $h$ is the binary entropy function. 
\end{proposition}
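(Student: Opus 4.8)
The plan is to mirror the proof of Proposition~\ref{prop:monogamy} up to the operator-norm estimate, and then to extract the two extra factors $2^{h(\noise)n}$ and $|{\cal Y}|$ by two successive triangle inequalities, corresponding to Bob's Hamming slack and Charlie's classical side information, respectively.

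First I would invoke Naimark's theorem to assume that Bob's conditional measurements $\{P^\theta_{x'}\}_{x'}$ (controlled by $\Theta$) and Charlie's conditional measurements $\{Q^{\theta,y}_{x''}\}_{x''}$ (controlled by $\Theta$ and $y = f(X)\in{\cal Y}$) are projective. Writing $F^\theta_x := H^\theta\proj{x}H^\theta$ and the Hamming-ball projection $\tilde P^\theta_x := \sum_{x':d_H(x,x')\le\noise n} P^\theta_{x'}$, the event is captured by
$$\Pi^\theta := \sum_x F^\theta_x \otimes \tilde P^\theta_x \otimes Q^{\theta,f(x)}_x,$$
which is a projection since the $F^\theta_x$ are orthogonal across $x$; here Charlie's choice is $f(x)$ precisely when Alice's outcome is $x$, and the measurements on $\rA,\rB,\rC$ commute. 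Exactly as before, $P[X'\epsclose X \wedge X''=X] = \frac1{|\code|}\sum_\theta \tr(\Pi^\theta\rho_{\rA\rB\rC}) \le \frac1{|\code|}\norm{\sum_\theta\Pi^\theta}$, and Lemma~2.2 of~\cite{TFKW13} reduces the task to bounding $\max_{\theta\ne\theta'}\norm{\Pi^\theta\Pi^{\theta'}}$ (the diagonal $\delta=0$ term contributing the $1/|\code|$). Bounding $\Pi^\theta\le\Gamma^\theta := \sum_x F^\theta_x\otimes\tilde P^\theta_x\otimes\I$ and $\Pi^{\theta'}\le\Delta^{\theta'}:=\sum_{x''}F^{\theta'}_{x''}\otimes\I\otimes Q^{\theta',f(x'')}_{x''}$, both of which are again projections, yields $\norm{\Pi^\theta\Pi^{\theta'}}\le\norm{\Gamma^\theta\Delta^{\theta'}}$ as in~\cite{TFKW13}.

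The heart of the argument is the estimate $\norm{\Gamma^\theta\Delta^{\theta'}} \le 2^{h(\noise)n}\,|{\cal Y}|\,2^{-d/2}$ for $\theta\ne\theta'$, and this is where I peel off the two new factors. For Charlie, I split $\Delta^{\theta'} = \sum_{y\in{\cal Y}}\Delta^{\theta',y}$ with $\Delta^{\theta',y} := \sum_{x'':f(x'')=y}F^{\theta'}_{x''}\otimes\I\otimes Q^{\theta',y}_{x''}$ a projection for each $y$; the triangle inequality gives $\norm{\Gamma^\theta\Delta^{\theta'}}\le|{\cal Y}|\max_y\norm{\Gamma^\theta\Delta^{\theta',y}}$. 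For Bob, I expand the ball projection over error patterns of Hamming weight $|e|\le\noise n$, namely $\tilde P^\theta_x = \sum_{e:|e|\le\noise n}P^\theta_{x\oplus e}$, which decomposes $\Gamma^\theta = \sum_{e:|e|\le\noise n}\Gamma^{\theta,e}$ with $\Gamma^{\theta,e}:=\sum_{x'}F^\theta_{x'\oplus e}\otimes P^\theta_{x'}\otimes\I$; a second triangle inequality gives $\norm{\Gamma^\theta\Delta^{\theta',y}}\le 2^{h(\noise)n}\max_e\norm{\Gamma^{\theta,e}\Delta^{\theta',y}}$, using that the number of weight-$\le\noise n$ patterns is at most $2^{h(\noise)n}$. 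Finally, each remaining term is of the original form: since $\{P^\theta_{x'}\otimes Q^{\theta',y}_{x''}\}_{x',x''}$ are mutually orthogonal projections on $\H_\rB\otimes\H_\rC$, the elementary identity $\norm{\sum_j A_j\otimes\Pi_j} = \max_j\norm{A_j}$ (for orthogonal $\Pi_j$) reduces it to $\max_{x',x''}\norm{F^\theta_{x'\oplus e}F^{\theta'}_{x''}} = \max_{x',x''}\bigl|\bra{x'\oplus e}H^{\theta\oplus\theta'}\ket{x''}\bigr| \le 2^{-d_H(\theta,\theta')/2}\le 2^{-d/2}$, the fixed shift $e$ not affecting the maximal-overlap bound. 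Combining, $\frac1{|\code|}\bigl(1 + (|\code|-1)\,2^{h(\noise)n}|{\cal Y}|\,2^{-d/2}\bigr)\le \frac1{|\code|} + \frac{2^{h(\noise)n}|{\cal Y}|}{2^{d/2}}$ gives the claim.

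I expect the main obstacle to be the bookkeeping in this operator-norm step: verifying that each of $\Pi^\theta$, $\Gamma^\theta$, $\Gamma^{\theta,e}$, $\Delta^{\theta'}$ and $\Delta^{\theta',y}$ is genuinely a projection (so that the passage from $\le$ to the product bound and the two triangle inequalities are all legitimate), and confirming that neither the fixed shift $e$ nor the restriction to $f^{-1}(y)$ disturbs the maximal-overlap estimate inherited from~\cite{TFKW13}. Once this is in place, the two new multiplicative factors emerge transparently as the two triangle-inequality counts, $2^{h(\noise)n}$ (Hamming-ball volume) and $|{\cal Y}|$ (number of side-information values).
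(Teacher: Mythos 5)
Your proof is correct and follows essentially the same route as the paper's: the same event projections $\tilde\Pi^\theta$ incorporating the Hamming-ball relaxation on Bob's side and the $f(x)$-controlled measurement on Charlie's side, the same reduction via Lemma~2.2 of~\cite{TFKW13} to bounding $\max_{\theta\neq\theta'}\bigl\|\tilde\Pi^\theta\tilde\Pi^{\theta'}\bigr\|$, and the same decomposition into at most $2^{h(\noise)n}\cdot|{\cal Y}|$ pairs of projections of the original form, each bounded by $2^{-d/2}$. The only (harmless) differences are that your $\Delta^{\theta'}$ retains only the terms with $y=f(x'')$ rather than summing over all $y\in{\cal Y}$, and that you spell out the orthogonal-block and maximal-overlap estimates that the paper simply imports from the proof of Proposition~\ref{prop:monogamy} and~\cite{TFKW13}.
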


\begin{proof}
Here,  we can write 
$$
P[X' \!\epsclose\! X \wedge X'' \!=\! X] = \frac{1}{|\code|} \biggl\| \sum_\theta \tilde\Pi^\theta \biggr\| \leq \frac{1}{|\code|} \sum_\delta \max_\theta \bigl\| \tilde\Pi^\theta \tilde\Pi^{\theta \oplus \delta} \bigr\|
$$
for projectors 
$$
\tilde\Pi^\theta = \sum_x H^\theta \proj{x}H^\theta \otimes \bigg( \sum_{e \in B_\noise^n} P^\theta_{x \oplus e} \bigg)  \otimes Q^{\theta,f(x)}_x \enspace,
$$ 
where $B_\noise^n \subset \set{0,1}^n$ denotes the set of stings with Hamming weight at most $\noise n$. 
For any $\theta \neq \theta' \in \code$, we can upper bound $\tilde\Pi^\theta$ and $\tilde\Pi^{\theta'}$ by 
$$
\tilde\Pi^\theta \leq \tilde\Gamma^\theta := \sum_x H^\theta \proj{x}H^\theta \otimes \bigg( \sum_{e \in B_\noise^n} P^\theta_{x \oplus e} \bigg)   \otimes \I = \sum_{e \in B_\noise^n} \Gamma_e^\theta
$$
and
$$
\tilde\Pi^{\theta'} \leq \tilde\Delta^{\theta'} := \sum_x H^{\theta'} \proj{x}H^{\theta'} \otimes \I \otimes \bigg( \sum_{y \in \cal Y} Q^{\theta',y}_x \bigg) = \sum_{y \in \cal Y} \Delta_y^{\theta'}
$$
respectively, where $\Gamma_e^\theta$ and $\Delta_y^{\theta'}$ are like $\Gamma^\theta$ and $\Delta^{\theta'}$, as defined in the proof of Proposition~\ref{prop:monogamy}, for certain concrete choices of the POVM's $\set{P^\theta_x}_x$ and $\set{Q^{\theta'}_x}_x$ that depend on $e$ and $y$, respectively. As such, we get that 
$$
\bigl\| \tilde\Pi^\theta \tilde\Pi^{\theta'} \bigr\| \leq \bigl\| \tilde\Gamma^\theta \tilde\Delta^{\theta'} \bigr\| \leq \sum_{e,y} \bigl\| \Gamma_e^\theta \Delta_y^{\theta'} \bigr\| \leq \frac{|B_\noise^n| \cdot |{\cal Y}|}{2^{d/2}} \leq \frac{2^{h(\noise)n} \cdot |{\cal Y}|}{2^{d/2}} \enspace.
$$
Since we still have that $\bigl\| \tilde\Pi^\theta \tilde\Pi^\theta \bigr\| = \bigl\| \tilde\Pi^\theta \bigr\| = 1$, the claim follows. 
\qed
\end{proof}

By means of the techniques from Section~\ref{sec:GuessingGames}, we can extend the result to the case where Bob and Charlie have a-priori quantum side information on Alice's choice of bases.

\begin{corollary}\label{cor:monogamy+}
Let $\H_\rA$ be a $n$-qubit system, and let $\H_\rB, \H_\rC$ and $\H_\rE$ be arbitrary quantum systems. Also, let $0 \leq \noise \leq \frac12$ be a parameter and $f:\set{0,1}^n \to {\cal Y}$ a function. Consider a state $\rho_{\Theta \rE} \in \dens(\code \otimes \H_\rE)$, and let 
$$
\rho_{\Theta\rA\rB\rC} = {\cal Q}_{\rE\to\rA\rB\rC} \bigl(\rho_{\Theta \rE}\bigr) \in \dens(\code \otimes \H_\rA \otimes \H_\rB \otimes \H_\rC)
$$
where ${\cal Q}_{\rE\to\rA\rB\rC}$ is a CPTP map acting on $\rE$, 
and let 
$$
\rho_{\Theta X X' X''} =  {\cal N}_{\Theta f(X) \rC \to X''} \circ {\cal N}_{\Theta \rB \to X'} \circ {\cal M}^\BB_{\Theta \rA \to X}\bigl(\rho_{\Theta\rA\rB\rC}\bigr)
$$
as in Proposition~\ref{prop:monogamy+} above. Then, it holds that
$$
P[X' \!\epsclose\! X  \wedge X'' \!=\! X] \leq \guess(\Theta|\rE) + \frac{\guess(\Theta|\rE) \cdot |\code| \cdot 2^{h(\noise)n} \cdot |{\cal Y}|}{2^{d/2}} \enspace.
$$
\end{corollary}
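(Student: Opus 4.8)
The plan is to mirror the proof of Corollary~\ref{cor:monogamy} verbatim, now using Proposition~\ref{prop:monogamy+} as the base case in place of Proposition~\ref{prop:monogamy}. First I would observe that Proposition~\ref{prop:monogamy+} already delivers the claimed bound in the special case where the side information is trivial, i.e.\ where $\rho_{\Theta\rE} = \mu_\code \otimes \sigma_\rE$ for some $\sigma_\rE \in \dens(\H_\rE)$. Indeed, since ${\cal Q}_{\rE\to\rA\rB\rC}$ acts on $\rE$ only, in that case $\rho_{\Theta\rA\rB\rC}$ has the product form $\mu_\code \otimes \rho_{\rA\rB\rC}$ demanded by Proposition~\ref{prop:monogamy+}, and we obtain $P[X' \epsclose X \wedge X'' \!=\! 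X] \leq 1/|\code| + 2^{h(\noise)n}|{\cal Y}|/2^{d/2}$.

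Next I would invoke Property~\ref{prop:bound} to dominate an arbitrary $\rho_{\Theta\rE} \in \dens(\code \otimes \H_\rE)$ by $\guess(\Theta|\rE) \cdot |\code| \cdot \mu_\code \otimes \sigma_\rE$ for a suitable $\sigma_\rE$. The key point is then that the whole experiment---apply ${\cal Q}_{\rE\to\rA\rB\rC}$, measure $\rA$ in the BB84 bases to obtain $X$, let Bob measure $\rB$ to obtain $X'$, compute $f(X)$ and hand it to Charlie, who measures $\rC$ to obtain $X''$, and finally record the indicator $1_{X' \epsclose X \wedge X'' = X}$---is a composition of CPTP maps and hence itself a single CPTP map $\dens(\code \otimes \H_\rE) \to \dens(\set{0,1})$. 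Since CPTP maps preserve the positive-semidefinite ordering (exactly as used implicitly in the proof of Corollary~\ref{cor:monogamy}), applying this composed map to both sides of the domination inequality yields
$$
P[X' \epsclose X \wedge X'' \!=\! X] \leq \guess(\Theta|\rE) \cdot |\code| \cdot \Bigl( \frac{1}{|\code|} + \frac{2^{h(\noise)n} \cdot |{\cal Y}|}{2^{d/2}} \Bigr) \enspace,
$$
which is precisely the asserted bound after distributing the factor $\guess(\Theta|\rE) \cdot |\code|$.

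The only point that needs a moment's care---and the sole place this differs from Corollary~\ref{cor:monogamy}---is that Charlie's measurement is now controlled not just by $\Theta$ but also by the classical value $f(X)$, which is derived from Alice's measurement outcome. I would make explicit that producing $f(X)$ from the classical register $X$ and feeding it forward as a control is still a CPTP operation (it is the composition of copying the classical $X$, applying $f$, and a measurement controlled by the resulting classical register), so the entire chain remains CPTP and the monotonicity step is unaffected. I do not expect any genuine obstacle here: the corollary is purely a \emph{lifting} of the uniform-bases bound of Proposition~\ref{prop:monogamy+} to quantum side information via Property~\ref{prop:bound}, and no new operator-norm estimate is required.
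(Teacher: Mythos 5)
Your proposal is correct and follows essentially the same route as the paper: the paper gives no separate proof for this corollary but states that it follows ``by means of the techniques from Section~\ref{sec:GuessingGames}'', i.e., precisely the lifting argument of Corollary~\ref{cor:monogamy}\,---\,dominate $\rho_{\Theta\rE}$ via Property~\ref{prop:bound} and push the operator inequality through the composed CPTP map, with Proposition~\ref{prop:monogamy+} as the base case. Your added remark that feeding the classical value $f(X)$ forward as a control keeps the whole experiment CPTP is exactly the right (and only) point needing care, and it holds under the paper's conventions for controlled measurements.
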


\begin{remark}\label{rem:honestBob}
In line with the remarks in Section~\ref{sec:GuessingGames}, if Bob ``measures correctly'' but is still given some slack, and, say, Charlie is given no side information on Alice's outcome, the bound relaxes to 
$$
P[X' \!\epsclose\! X  \wedge X'' \!=\! X] \leq \guess(\Theta|\rE) + \frac{\guess(\Theta|\rE) \cdot |\code| \cdot 2^{h(\noise)n}}{2^{d}} \enspace.
$$
\end{remark}

\section{On the Existence of Suitable Secure Sketches}\label{sec:SSConstruction}

In~\cite[Lemma~5]{DS05}, Dodis and Smith show that for any constant $0 < \lambda < 1$, there exists an explicitly constructible family of binary linear codes $\set{\code_i}_{i \in \cal I}$ in $\set{0,1}^n$ with dimension $k$ that efficiently correct a constant fraction of errors and have square {\em bias} $\delta^2 \leq 2^{-\lambda n}$. Their Lemma 4 then shows that the keyed hash function $\Ext: {\cal I} \times \set{0,1}^n \to {\cal SYN} = \set{0,1}^{n-k}$, $(i,x) \mapsto syn_i(x)$ is a strong extractor, where $syn_i(x)$ is the syndrome with respect to the code $\code_i$. 
More precisely, the generalization of their result to quantum side information by Fehr and Schaffner~\cite{FS08} shows that if $\rho_{IX\rE} = \mu_{\cal I} \otimes \rho_{X \rE} \in \dens({\cal I} \otimes {\cal X} \otimes \H_\rE)$ then 
$$
\delta(\rho_{\Ext(I,X) I \rE}, \mu_{\cal SYN} \otimes \rho_I \otimes \rho_\rE) \leq \frac12 \sqrt{\guess(X|\rE) \, \delta^2 \, 2^n } \enspace.
$$
It follows from Proposition~\ref{prop:hashconstruction} that the secure sketch 
$$
\SS: {\cal L} \times \set{0,1}^n \to {\cal SYN}, \, (i\|b,x) \mapsto syn_i(x)+b
$$
where ${\cal L} := {\cal I} \times {\cal SYN}$, offers uniformity and $\nu$-key-privacy with parameter $\nu = \delta 2^{n/2}/\sqrt{|{\cal SYN}|} = \delta 2^k$. 

\smallskip

Dodis and Smith are not explicit about the {\em size} $n-k$ of the syndrome in their construction, but looking at the details, we see that $n-k\leq \log(\delta^2 \, 2^n)$. As such, by artificially extending the range ${\cal SYN} = \set{0,1}^{n-k}$ of $\SS$ to a set ${\cal S} = \set{0,1}^{\varsigma}$ of bit strings of size $\varsigma := \log(\delta^2 \, 2^n)$, and re-defining $\SS$ to map $(i\|b,x)$ to $syn_i(x)+b$ {\em padded with sufficiently  many $0$'s}, we get that the secure sketch $\SS: {\cal L} \times \set{0,1}^n \to {\cal S}$ is message-independent and offers {\em ideal} key-privacy.%
\footnote{Alternatively, we could simply stick to $\SS: {\cal L} \times \set{0,1}^n \to {\cal SYN}$ but carry along the non-ideal parameter $\nu$; however, we feel that this additional parameter would make things more cumbersome\,---\,but of course would lead to the same end result. }

\end{appendix}


\begin{thebibliography}{[1]}

\def\LNCS{{\em LNCS}}

\bibitem{AMTW00} 
A.~Ambainis, M.~Mosca, A.~Tapp, and R.~De~Wolf. 
\newblock Private quantum channels. 
\newblock In {\em 41st IEEE FOCS}, pp.~547--553 (2000).

\bibitem{BB84} 
C.H.~Bennett, and G.~Brassard. 
\newblock Quantum cryptography: Public key distribution and coin tossing. 
\newblock In {\em IEEE International Conference on Computers, Systems \& Signal Processing}, pp.~175--179, 1984. 

\bibitem{BBB} 
C.H.~Bennett, G.~Brassard, and S.~Breidbart. 
\newblock Quantum cryptography II: How to re-use a one-time pad safely even if $P\!=\!N\!P$. 
\newblock In {\em Natural Computing}, 13(4):453--458 (2014). 

\bibitem{BCGST02} 
H.~Barnum, C.~Cr{\'e}peau, D.~Gottesman, A.~Smith, and A.~Tapp.
\newblock Authentication of quantum messages. 
\newblock In {\em 43rd IEEE FOCS}, pp.~449--458 (2002).

\bibitem{DPS05} 
I.~Damg{\aa}rd, T.~Brochmann Pedersen, L.~Salvail.
\newblock A quantum cipher with near optimal key-recycling. 
\newblock In {\em CRYPTO 2005}, vol.~3621 of \LNCS, pp.~494--510 (2005). 

\bibitem{DPS14} 
I.~Damg{\aa}rd, T.~Brochmann Pedersen, L.~Salvail.
\newblock How to re-use a one-time pad safely and almost optimally even if $P \!=\! N\!P$. 
\newblock In {\em Natural Computing},13(4):469--486 (2014). 

\bibitem{DFSS07} 
I.~Damg{\aa}rd, S.~Fehr, L.~Salvail, and C.~Schaffner.
\newblock Secure identification and QKD in the bounded-quantum-storage model.
\newblock In {\em CRYPTO 2007}, vol.~4622 of \LNCS, pp.~342--359 (2007).


\bibitem{DS05} 
Y.~Dodis and A.~Smith.
\newblock Correcting errors without leaking partial information. 
\newblock In {\em 37th ACM STOC}, pp. 654--663 (2005). 

\bibitem{FS08} 
S.~Fehr and C.~Schaffner.
\newblock Randomness extraction via delta-biased masking in the presence of a quantum attacker.
\newblock In {\em TCC~08}, vol.~4948 of \LNCS, pp.~465--481 (2008).

\bibitem{GYZ16} 
S.~Garg, H.~Yuen, and M.~Zhandry. 
\newblock New security notions and feasibility results for authentication of quantum data. 
\newblock Manuscript, {\tt arXiv:1607.07759v1} (2016)

\bibitem{HLM} 
P.~Hayden, D.~Leung, and D.~Mayers. 
\newblock Universal composable security of quantum message authentication with key recycling. 
\newblock Talk at {\em QCRYPT 2011}, Z{\"u}rich (2011)

\bibitem{KRS09} 
R.~K\"onig, R.~Renner and C.~Schaffner.
\newblock The operational meaning of min- and max-entropy.
\newblock In {\em IEEE Transactions on Information Theory}, 55(9):4337--4347 (2009). 

\bibitem{Leung02} 
D.~Leung. 
\newblock Quantum Vernam Cipher. 
\newblock In {\em Quantum Information \& Computation}, 2(1):14--34 (2002).

\bibitem{NZ96} 
N.~Nisan and D.~Zuckerman. 
\newblock Randomness is linear in space.
\newblock In {Journal of Computer and System Sciences}, 52(1):43--52 (1996).

\bibitem{OH03} 
J.~Oppenheim, and M.~Horodecki.
\newblock How to reuse a one-time pad and other notes on authentication, encryption and protection of quantum information. 
\newblock In {\em Physical Review A}, vol~72: 042309 (2005).

\bibitem{Por16}
C.~Portmann.  
\newblock Quantum authentication with key recycling. 
\newblock Manuscript, {\tt arxiv.org/abs/1610.03422v1} (2016), also to appear in these proceedings. 

\bibitem{Ren10}
R.~Renner.
\newblock Security of Quantum Key Distribution.
\newblock PhD thesis, ETH Z{\"u}rich, No.~16242 (2005).

\bibitem{RT00}
J.~Radhakrishnan, and A.~{Ta-Shma}. 
\newblock Bounds for dispersers, extractors, and depth-two superconcentrators.
\newblock In {\em SIAM Journal on Computing}, 13(1):2--24 (2000).

\bibitem{TFKW13}  M.~Tomamichel, S.~Fehr, J.~Kaniewski, and S.~Wehner.
\newblock One-sided device independent QKD and position-based cryptography from monogamy games.
\newblock In {\em EUROCRYPT~2013}, vol.~7881 of \LNCS, pp.~609--625 (2013). \\

\end{thebibliography}
\end{document}